
\documentclass[journal]{IEEEtran}

\usepackage{hyperref}
\usepackage{url}
\usepackage{amssymb,amsfonts,amsmath}
\usepackage{mathtools}

\usepackage[linesnumbered,ruled,vlined]{algorithm2e}
\usepackage{stackengine}

\usepackage{float}

\usepackage{color}
\usepackage{booktabs}

\newcommand{\N}{{\mathbb{N}}}

\newcommand{\R}{{\mathbb{R}}}

\newcommand{\vectornorm}[1]{\left|\left|#1\right|\right|}
\newcommand{\ip}[2]{\langle #1,#2 \rangle}

\DeclareMathOperator*{\argmin}{arg\,min}

\newcommand{\avec}{\mathbf{a}}

\newcommand{\x}{\mathbf{x}}
\newcommand{\y}{\mathbf{y}}
\newcommand{\z}{\mathbf{z}}

\newcommand{\A}{\mathbf{A}}
\newcommand{\U}{\mathbf{U}}
\newcommand{\vmat}{\mathbf{V}}
\newcommand{\W}{\mathbf{W}}
\newcommand{\X}{\mathbf{X}}
\newcommand{\Y}{\mathbf{Y}}

\newcommand{\eye}{\mathbf{I}}

\newtheorem{proof}{Proof}

\newtheorem{thm}{Theorem}
\newtheorem{lem}[thm]{Lemma}

\newtheorem{defn}[thm]{Definition}

\ifCLASSINFOpdf
\else
\fi
\hyphenation{op-tical net-works semi-conduc-tor}

\begin{document}
%
\title{Deeply-Sparse Signal rePresentations (D$\text{S}^2$P)}
%
%
%

\author{Demba~Ba,~\IEEEmembership{Member,~IEEE}
\thanks{D. Ba is with School of Engineering and Applied Sciences, Harvard University, Cambridge, MA (e-mail: demba@seas.harvard.edu)}
\thanks{Manuscript received April 19, 2005; revised August 26, 2015.}}

\markboth{Journal of \LaTeX\ Class Files,~Vol.~14, No.~8, August~2015}%
{Shell \MakeLowercase{\textit{et al.}}: Bare Demo of IEEEtran.cls for IEEE Journals}
%



\maketitle

\begin{abstract}
A recent line of work shows that 
a deep neural network with ReLU nonlinearities arises from a finite sequence of cascaded sparse coding models, the outputs of which, except for the last element in the cascade, are sparse and 
unobservable. That is, intermediate outputs deep in the cascade are sparse, hence the title of this manuscript. We show here, using techniques from the dictionary learning literature that, if
the measurement matrices in the cascaded sparse coding model (a) satisfy RIP and (b) all have sparse columns except for the last, they can be recovered with high probability. We propose two algorithms for this purpose: one that recovers the matrices in a forward sequence, and another that recovers them in a backward sequence.  
The method of choice in deep learning to solve this problem is by training an auto-encoder. 
Our algorithms provide a sound alternative, 
with theoretical guarantees, as well
upper bounds on sample complexity. The theory shows that the learning complexity of the forward algorithm depends on the number of hidden units at the deepest layer and the number of active neurons at that layer (sparsity). In addition, the theory relates the number of hidden units in successive layers, thus giving a practical prescription for designing deep ReLU neural networks. Because it puts fewer restrictions on the architecture, the backward algorithm 
requires more data. 
We demonstrate the deep dictionary learning algorithm via simulations. 
Finally, we use a coupon-collection argument to conjecture a lower bound on sample complexity that gives some insight as to why deep networks require more data to train than shallow ones. 
\end{abstract}

\begin{IEEEkeywords}
Dictionary Learning, Deep Neural Networks, Sample Complexity
\end{IEEEkeywords}

%
\IEEEpeerreviewmaketitle

\section{Introduction}
%
%
%
%

\IEEEPARstart{D}eep learning has been one of the most popular areas of research over the past few years, due in large part to the ability of deep neural networks  to outperform humans at a number of cognition tasks, such as object and speech recognition. 

Despite the mystique that  has surrounded their success, recent work has started to provide answers to questions pertaining, on the one hand, to basic assumptions behind deep networks--when do they work?--and, on the other hand, to interpretability--why do they work? In~\cite{patel2015probabilistic}, Patel explains deep learning from the perspective of inference in a hierarchical probabilistic graphical model. This leads to new inference algorithms based on belief propagation and its variants. In a series of articles, the authors from~\cite{papyan2017convolutional,papyan2017working,sulam2018multilayer} consider deep convolutional networks through the lens of a multi-layer convolutional sparse coding (ML-CSC) model. The authors show a correspondence between the sparse approximation step in this multi-layer model and the encoding step (forward pass) in a related deep convolutional network. Specifically, they show that convolutional neural networks with ReLU nonlinearities can be interpreted as sequential algorithms to solve for the sparse codes in the ML-CSC model. The authors carry out a detailed theoretical analysis of when the sparse recovery algorithms succeed in the absence of noise, and when they are stable in the presence of noise. More recently, building on the work of Papyan, the work in~\cite{ye2018deep} have show that some of the key operations that arise in deep learning (e.g. pooling, ReLU) can be understood from the classical theory of filter banks in signal processing. In a separate line of work, Tishby~\cite{tishby2015deep} uses the information bottleneck principle from information theory to characterize the limits of a deep network from an information-theoretic perspective.

The works~\cite{patel2015probabilistic} and~\cite{papyan2017convolutional,papyan2017working,sulam2018multilayer} relate the inference step (forward pass) of neural networks to various generative models, namely graphical models in the former and the ML-CSC model in the latter. They do not provide theoretical guarantees for learning the filters (weights) of the respective generative models. Here, we take a more expansive approach than in~\cite{patel2015probabilistic,papyan2017convolutional,ye2018deep} that connects deep networks to the theory of dictionary learning, to answer questions pertaining, not to basic assumptions and interpretability, but to the sample complexity of learning a deep network--how much data do you need to learn a deep network? 

Classical dictionary-learning theory~\cite{altmin} tackles the problem of estimating a \emph{single} unknown transformation from data obtained through a sparse coding model. The theory gives bounds for the sample complexity of learning a dictionary as a function of the parameters of the sparse-coding model. Motivated by classical dictionary learning, recent work~\cite{nguyen2019dynamics} shows that a two-layer auto-encoder with ReLU nonlinearity trained by backpropagation, using an \emph{infinite amount of data}, learns the dictionary from a sparse-coding model. 
Neither classical dictionary learning theory, nor the work from~\cite{nguyen2019dynamics}, provide a framework for assessing the complexity of learning a hierarchy, or sequence, of transformations from data. 

We formulate a deep version of the classical sparse-coding generative model from dictionary learning~\cite{altmin}: starting with a sparse code, we apply a composition of linear transformations to generate an observation. We constrain all the transformations in the composition, except for the last, to have sparse columns, so that their composition yields sparse representations at every step. We solve the deep dictionary learning problem--learning all of the transformations in the composition--by sequential alternating minimization. We introduce two sequential algorithms. The first is a forward-factorization algorithm that learns the transformations in a forward sequence, starting with the first and ending with the last. The second is a backward-factorization algorithm that learns the transformations in a backward sequence.


We begin the rest of our treatment by briefly introducing notation (Section~\ref{sec:nottn}). Our main contributions are the following

\noindent  \textbf{The connection between dictionary learning and auto-encoders} First, in Section~\ref{sec:shallow} we develop the connection between classical dictionary learning and deep recurrent sparse auto-encoders~\cite{gregor2010learning,rolfe2013discriminative,tolooshams2018scalable,chang2019randnet}.  This motivates a deep generalization of the sparse coding model, developed in Section~\ref{sec:ds2p}, and its connection to deep ReLU auto-encoders.

\noindent \textbf{Upper bounds on the sample complexity of deep sparse dictionary learning} Second, 
we prove in Section~\ref{sec:ds2p} that, under regularity assumptions, both the forward and backward-factorization algorithms can learn the sequence of dictionaries in the deep model. For each algorithm, we characterize the computational complexity of this learning as a function of the model parameters. Let $\{r_\ell\}_{\ell=1}^L$ be the 
the number of hidden units at layer $\ell$ (layer $L$ is the shallowest hidden layer and layer $1$ the deepest). Further  let $\{s_{\mathbf{Y}^{(\ell-1)}}\}_{\ell=1}^L$ be the 
number of active neurons at layer $\ell$, and $\{s_{(\ell)}\}_{\ell=1}^{L-1}$ denote the sparsity level of the sparse weight matrix connecting hidden layer $\ell+1$ to hidden layer $\ell$. The computational complexity of the forward-factorization is $\mathcal{O} \left(\text{ max}(r_1^2,r_1 s_{\Y^{(0)}} )\right)$, i.e., a function of the number of hidden units in the deepest layer and the number active neurons in that layer. In addition, the forward-factorization algorithm requires $r_{\ell} = \mathcal{O} \left( \text{ max}(r_{\ell+1}^2,r_{\ell+1} s_{(\ell)}) \right)$, $\ell=1,\ldots,L-1$, i.e., it relates the number of hidden units at a given layer to that at the preceding layer, giving a practical prescription for designing deep ReLU architectures. As detailed in the main text, the
backward-factorization algorithm 
requires more data because it puts less stringent constraints on the architecture.

\noindent \textbf{Conjecture on a lower bound on complexity} Third, we use an argument based on the coupon-collector problem~\cite{anceaume2015new} to conjecture a lower bound $\mathcal{O}(\frac{r_1}{s_{\Y^{(0)}}} \text{ log } r_1)$ on complexity. This bound gives some insight as to why deep networks require more data to train than shallow ones.
	
	
\noindent \textbf{A characterization of the properties of column-sparse random matrices} Fourth, our proofs rely on a certain class of sparse matrices satisfying the restricted-isometry property (RIP)~\cite{candes2008restricted}. We prove this in Section~\ref{sec:sprand} for random matrices using results from non-asymptotic random matrix theory~\cite{vershynin2010introduction}.

\noindent \textbf{A notion of RIP for product matrices} Finally, our proof of convergence of the forward-factorization algorithm relies on a structured \emph{product} of matrices satisfying RIP, i.e., a notion of near-isometry for the product of matrices. We introduce such a notion and prove it for the matrices involved in the deep sparse-coding model. The RIP was originally developed for single matrices. Our result may be of independent interest.
		
	We demonstrate the algorithms for learning the sequence of transformations in the deep sparse-coding model via simulation in Section~\ref{sec:sims}. Similar to~\cite{altmin}, the simulations suggest that the second-order terms in the complexity results above are an artifact of the proof techniques we borrow from~\cite{altmin}. 
	We give concluding remarks in Section~\ref{sec:disco}.

\section{Notation}
\label{sec:nottn}

We use bold font for matrices and vectors, capital letters for matrices, and lower-case letters for vectors. For a matrix $\A$, $\avec_j$ denotes its $j^{\text{th}}$ column vector, and $\A_{ij}$ its element at the $i^{\text{th}}$ row and the $j^{\text{th}}$ column.  For a vector $\x$, $x_i$ denotes its $i^{\text{th}}$ element. $\A^\text{T}$ and $\x^\text{T}$ refer, respectively, to the transpose of the matrix $\A$ and that of the vector $\x$. We use $||\x||_p$ to denote the $\ell_p$ norm of the vector $\x$. We use $\sigma_{\text{min}}(\A)$ and $\sigma_{\text{max}}(\A)$ to refer, respectively, to the minimum and maximum singular values of the matrix $\A$. We will also use $\vectornorm{\A}_2$ to denote the spectral norm (maximum singular value of a matrix). We will make it clear from context whether a quantity is a random variable/vector. We use $\eye$ to refer the identity matrix. Its dimension will be clear from the context. Let $r \in \N^+$. For a vector $\x \in \R^r$, $\text{Supp}(\x) \subset \{1,\ldots,r\}$ refers to set of indices corresponding to its nonzero entries. 

\section{Shallow Neural Networks and Sparse Estimation}
\label{sec:shallow}

The rectifier-linear unit--ReLU--is a popular nonlinearity in the neural-networks literature. Let $z \in \R$, the ReLU nonlinearity is the scalar-valued function defined as ReLU($z) = \text{max}(z,0)$.  In this section, we build a parallel between sparse approximation/$\ell_1$-regularized regression and the ReLU($\cdot$) nonlinearity. This leads to a parallel between dictionary learning~\cite{altmin} and auto-encoder networks~\cite{rolfe2013discriminative,sreter2017learned}. In turn, this motivates an interpretation of learning a deep neural network as a hierarchical, i.e., sequential, dictionary-learning problem in a generative model that is the cascade of sparse-coding models~\cite{papyan2017convolutional}.




\subsection{Unconstrained and non-negative $\ell_1$-regularization in one dimension}

We begin by a derivation of the soft-thresholding operator from the perspective of sparse approximation. Let $y \in \R$, $\lambda > 0$, and consider the inverse problem
\begin{equation}
	\min\limits_{x\in \R} \frac{1}{2} (y-x)^2 + \lambda |x|.
	\label{eq:uncsoft}
\end{equation}
\noindent It is well-known that the solution $\hat{x}$ to Equation~\ref{eq:uncsoft} is given by the soft-thresholding operator, i.e.,
\begin{equation}
	\hat{x} = \text{sgn}(y)\text{max}(|y|-\lambda,0) \coloneqq s_\lambda(y) = \text{ReLU}(y-\lambda)-\text{ReLU}(-y-\lambda).
	\label{eq:softop}
\end{equation}
\noindent For completeness, we give the derivation of this result in the appendix. If $x$ is non-negative, Equation~\ref{eq:uncsoft} becomes

\begin{equation}
	\min\limits_{x\in \R^+} \frac{1}{2} (y-x)^2 + \lambda x,
	\label{eq:csoft}
\end{equation}
\noindent and its solution 
$\hat{x} = \text{max}(y-\lambda,0) = \text{ReLU}(y-\lambda)$. In other words, the ReLU($\cdot$) nonlinearity arises in the solution to a simple constrained $\ell_1$-regularized regression problem in one dimension. To see this, for $y > 0$, the solution coincides with that of Equation~\ref{eq:uncsoft}. For $y \leq 0$, the solution must be $\hat{x} = 0$. Suppose that $x > 0$, then the value of the objective function is $\frac{1}{2} (y-x)^2 + \lambda x$, which is strictly greater that $\frac{1}{2} y^2$, i.e., the objective function evaluated at $x = 0$. 

\subsection{Unconstrained $\ell_1$-regularization in more than one dimension}

The above results generalize easily to the case when the observations and the optimization variable both live in higher dimensions and are related through a unitary transform. Let $r \in \N^+$ and $\y \in \R^{r}$, $\lambda > 0$, and $\A$ be a unitary matrix. Consider the problem
\begin{equation}
	\min\limits_{\x\in \R^{r+}} \frac{1}{2} \vectornorm{\y-\A\x}_2^2 + \lambda \vectornorm{\x}_1.
	\label{eq:pdcsoft}
\end{equation}
\noindent Since $\A$ is unitary, i.e., an isometry, Equation~\ref{eq:pdcsoft} is equivalent to
\begin{equation}
	\min\limits_{\x\in \R^{r+}} \frac{1}{2} \vectornorm{\tilde{\y}-\x}_2^2 + \lambda \vectornorm{\x}_1 = \min\limits_{\x\in \R^{r+}} \sum_{j=1}^r \frac{1}{2} (\tilde{y}_j - x_j)^2 + \lambda x_j,
	\label{eq:pdcsoftiso}
\end{equation}
\noindent where $\tilde{y}_j = \avec_j^{\text{T}} \y$, $j = 1,\ldots,r$. 
\noindent Equation~\ref{eq:pdcsoftiso} is separable in $x_1,\ldots,x_r$. For each $x_j$, the optimization is equivalent to Equation~\ref{eq:csoft} with $\tilde{y}_j$ as the input, $j = 1,\ldots,r$. Therefore,
\begin{equation}
      \hat{x}_j  = \text{ReLU}\left(\begin{bmatrix} \avec_j \\ -\lambda \end{bmatrix}^{\text{T}} \begin{bmatrix} \y \\ 1 \end{bmatrix}\right).
       \label{eq:softrelueq}
\end{equation}

\begin{figure}[H]
  \centering
  \includegraphics{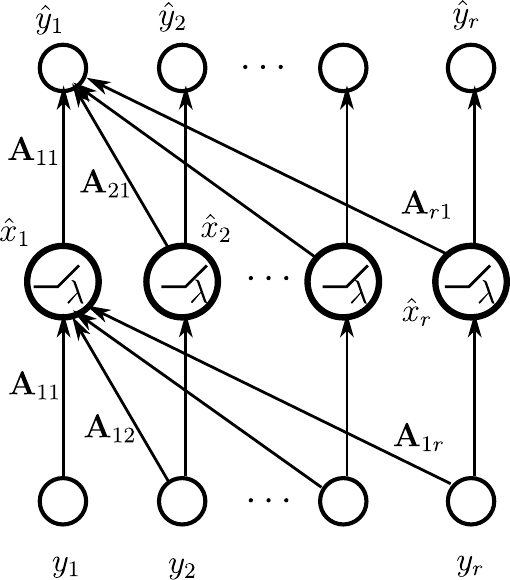}
\caption{Auto-encoder architecture motivated by the interpretation of the ReLU in the context of sparse approximation. The encoder solves the $\ell_1$-regularized least-squares problem with non-negativity constraints. The decoder reconstructs the observations by applying the dictionary to the output of the encoder. Assuming the biases are known, the only parameter of this architecture is the dictionary $\A$, which can be learned by backpropagation. To simplify the figure, we only draw a subset of all of the connections.}
\label{fig:unitary_ae}
\end{figure}
 
\noindent 
Equation~\ref{eq:softrelueq} states that, for $\A$ unitary, a simple one-layer feed-forward neural network solves the inverse problem of Equation~\ref{eq:pdcsoft}, and that 
$-\lambda$ plays the role of the bias in neural networks. 
Applying the transformation $\A$ to the vector $\hat{\x}$ yields an approximate reconstruction $\hat{\y} = \A \hat{\x}$. We depict this two-stage process as a two-layer feed-forward neural network in Figure~\ref{fig:unitary_ae}. The architecture depicted in the figure is called an auto-encoder~\cite{rolfe2013discriminative,sreter2017learned,tolooshams2018scalable,chang2019randnet}. Given training examples, the weights of the network, which depend on $\A$, can be learned by backpropagation. This suggests a connection between dictionary learning and auto-encoder architectures, which we elaborate upon below.

\noindent \underline{\textbf{Remark 1}}: The literature suggests that the parallel between the ReLU and sparse approximation dates to the work~\cite{gregor2010learning}. 
In~\cite{glorot2011deep}, the authors discuss in detail the sparsity-promoting properties of the ReLU compared to other nonlinearities in neural networks.

\subsection{Sparse coding, dictionary learning, and auto-encoders}

We use the relationship between ReLU and $\ell_1$-regularized regression to draw a parallel between dictionary learning~\cite{altmin} and a specific auto-encoder neural-network architecture~\cite{rolfe2013discriminative,sreter2017learned,tolooshams2018scalable,chang2019randnet}. 

\noindent \paragraph{Shallow sparse generative model} Let $\mathbf{Y}$ be a $d \times n$ real-valued matrix generated as follows
\begin{equation}
	\mathbf{Y} = \mathbf{A}\mathbf{X}, \mathbf{A} \in \R^{d \times r}, \mathbf{X} \in \R^{r \times n}.
	\label{eq:shallowdl}
\end{equation}
\noindent Each column of $\mathbf{X}$ is an $s$-sparse vector, i.e., only $s$ of its elements are non-zero~\cite{altmin}. The non-zero elements represent the coordinates or codes for the corresponding column of $\mathbf{Y}$ in the dictionary $\mathbf{A}$~\cite{altmin}. 

\noindent \underline{\textbf{Remark 2}}: We call this model ``shallow'' because there is only one transformation $\A$ to learn. 

\noindent \paragraph{Sparse coding and dictionary learning} Given $\mathbf{Y}$, the goal is to estimate $\mathbf{A}$ and $\mathbf{X}$. Alternating minimization~\cite{altmin} (Algorithm~\ref{algo:altminalgo}) is a popular algorithm to find $\mathbf{A}$ and $\mathbf{X}$ as the solution to
\begin{eqnarray}
	(\hat{\A},\hat{\mathbf{X}}) & = & \argmin \limits_{\X,\A} \vectornorm{\x_i}_1, \forall i=1,\ldots,n  \nonumber \\
	& &\text{ s.t. } \mathbf{Y} = \A\mathbf{X}, \vectornorm{\avec_j}_2 = 1, \nonumber \\
	& & \forall j=1,\ldots,r.
	\label{eq:scopt}
\end{eqnarray}
\noindent Algorithm~\ref{algo:altminalgo} solves Equation~\ref{eq:scopt} by alternating between a sparse coding step, which updates the sparse codes given an estimate of the dictionary, and a dictionary update step, which updates the dictionary given estimates of the sparse codes. In unconstrained form, the sparse-coding step of the $t^{\text{th}}$ iteration of is equivalent to solving Equation~\ref{eq:pdcsoft}, with a value for $\lambda$ that depends on $\epsilon_t$. We defer a discussion of the assumptions behind Algorithm~\ref{algo:altminalgo} to Section~\ref{sec:ds2p}, where we introduce two deep generalizations of the algorithm.
\begin{algorithm}
\KwIn{Samples $\mathbf{Y}$, initial dictionary estimate $\A(0)$, accuracy sequence $\epsilon_t$, sparsity level $s$, and number of iterations $T$. 
}
\For{$t=0$ to $T-1$}{
	\For{$i=1$ to $n$}{
		$\mathbf{X}(t+1)_i = \argmin \limits_{\x} \vectornorm{\x}_1 \text{ s.t. } \vectornorm{\y_i-\A(t)\x}_2 \leq \epsilon_t$
	}
	Threshold: $\mathbf{X}(t+1) = \mathbf{X}(t+1) \cdot (\mathbb{I}[\mathbf{X}(t+1) > 9s\epsilon_t])$\\
	Estimate $\A(t+1) = \mathbf{Y}\mathbf{X}(t+1)^{+}$\\
	Normalize: $\A(t+1)_i = \frac{\avec(t+1)_i}{\vectornorm{\avec(t+1)_i}_2} $
}
\KwOut{$\left(\A(T),\mathbf{X}(T)\right)$}
\caption{AltMinDict($\mathbf{Y}$,$\A(0)$,$\epsilon_t$,$s$,$T$): Alternating minimization algorithm for dictionary learning}
\label{algo:altminalgo}
\end{algorithm}

Suppose that instead of requiring equality in Equation~\ref{eq:shallowdl}, our goal where instead to solve the following problem
\begin{equation}
	\min\limits_{\mathbf{A},\mathbf{X}} \frac{1}{2} \vectornorm{\mathbf{Y}-\A\mathbf{X}}_\text{F}^2 + \lambda \sum_{i=1}^n \vectornorm{\x_i}_1.
	\label{eq:shallowdlpb}
\end{equation}

If $\A$ were a unitary matrix, the sparse-coding step could be solved exactly using Equation~\ref{eq:softrelueq}. The goal of the dictionary-learning step is to minimize the reconstruction error between $\A$ applied to the sparse codes, and the observations. In the neural-network literature, this two-stage process describes so-called auto-encoder architectures~\cite{rolfe2013discriminative,sreter2017learned}.

\noindent \underline{\textbf{Remark 3}}: We make the assumption that $\A$ is unitary to simplify the discussion and make the parallel between neural networks and dictionary learning more apparent. If $\A$ is not unitary, we can use 
the iterative soft-thresholding algorithm (ISTA)~\cite{beck2009fast}: $\x_k = s_{\frac{\lambda}{M}}\left(\x_{k-1} + \frac{1}{M}\A^{\text{T}}(\y-\A \x_{k-1})\right)$, where $k$ indexes the iterations of the algorithm, and $M \geq \sigma_{\text{max}}(\A^\text{T}\A)$.

\begin{figure}[H]
  \centering
  \includegraphics[scale=0.9]{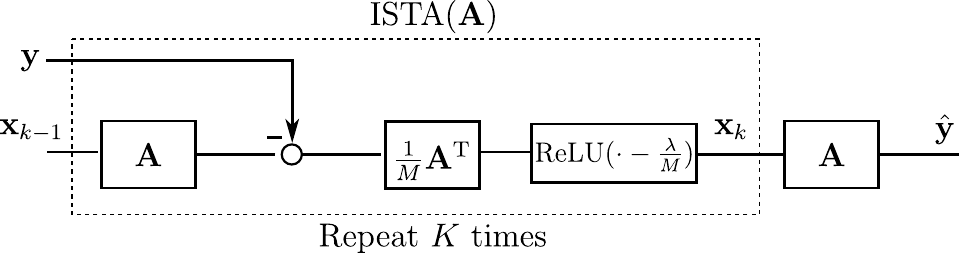}
\caption{Auto-encoder architecture motivated by alternating-minimization algorithm for sparse coding and dictionary learning. The encoder uses $K$ ($K$ large) iterations of the ISTA algorithm for sparse coding, starting with a guess $\mathbf{x}_0$ of the sparse code. The decoder reconstructs the observations by applying the dictionary to the output of the encoder. Assuming the biases are known, the only parameter of this architecture is the dictionary $\A$, which can be learned by backpropagation. $M$ is a constant such that $M \geq \sigma_{\text{max}}(\A^\text{T}\A)$.}
\label{fig:ista_ae}
\end{figure}

\noindent \paragraph{Shallow, constrained, recurrent, sparse auto-encoders.}

We introduce an auto-encoder architecture for learning the model from Equation~\ref{eq:shallowdl}. This auto-encoder has an implicit connection with the
alternating-minimization algorithm applied to the same model. Given $\A$, the encoder produces a sparse code using a finite
(large) number of iterations of the ISTA algorithm~\cite{beck2009fast}. The
decoder applies $\A$ to the output of the decoder to reconstruct
$\y$. We call this architecture a constrained recurrent sparse
auto-encoder (CRsAE)~\cite{tolooshams2018scalable,TolooshamsBahareh2019deepresidualAE}. The constraint comes from the fact
that the operations used by the encoder and the decoder are
tied to each other through $\A$. Hence, the encoder and decoder
are not independent, unlike in~\cite{rolfe2013discriminative,sreter2017learned}. The encoder is recurrent because of the ISTA algorithm,
which is an iterative procedure. Figure~\ref{fig:ista_ae} depicts this architecture. Recent work~\cite{nguyen2019dynamics} shows that a version of this auto-encoder, with on iteration of ISTA, trained by backpropagation, learns the dictionary from a sparse-coding model under certain regularity assumptions. Their work, however, assumes an \emph{infinite amount of data} and does not consider, as we do, the problem of learning a hierarchy of dictionaries and its finite-sample complexity.

\noindent There are two basic take-aways from the previous discussion
\begin{enumerate}
	\item Constrained auto-encoders with ReLU nonlinearities capture the essence of the alternating-minimization algorithm for dictionary learning.
	\item Therefore, the sample complexity of dictionary learning can give us insights on the hardness of learning neural networks.
\end{enumerate}

\noindent \paragraph{How to use dictionary learning to assess the sample complexity of learning deep networks?} The depth of a neural network refers to the number of its hidden layers, excluding the output layer. A shallow network is one with two or three hidden layers~\cite{theodoridis2015machine}. A network with more than three hidden layers is typically called ``deep''. Using this definition, the architecture from Figure~\ref{fig:ista_ae} would be called deep. This is because of $K$ iterations of ISTA which, when unrolled~\cite{gregor2010learning,rolfe2013discriminative,tolooshams2018scalable,chang2019randnet} would constitute $K$ separate layers. This definition, however, does not reflect the fact that the only unknown in the network is $\A$. Therefore, the number of parameters of the network is the same as that in a one-layer, fully-connected, feed-forward network.

A popular interpretation of deep neural networks is that they learn a hierarchy, or sequence, of transformations of data. Motivated by this interpretation, we define the \emph{depth of a network}, not in relationship to its number of layers, but \emph{as the number of underlying distinct transformations/mappings to be learned.}

Classic dictionary learning tackles the problem of estimating a \emph{single} transformation from data~\cite{altmin}. Dictionary-learning theory characterizes the sample complexity of learning the model of Equation~\ref{eq:shallowdl} under various assumptions. We can use these results to get insights on the complexity of learning the parameters of the auto-encoder from Figure~\ref{fig:ista_ae}. Classical dictionary learning theory does not, however, provide a framework for assessing the complexity of learning a hierarchy, or sequence, of transformations from data.

\section{Deep Sparse Signal Representations}
\label{sec:ds2p}

Our goal is to build a deep (in the sense defined previously) version of the model from Equation~\ref{eq:shallowdl}, i.e., a generative model in which, starting with a sparse code, a composition of linear transformations are applied to generate an observation. What properties should such a model obey? In the previous section, we used the sparse generative model of Equation~\ref{eq:shallowdl} to motivate the auto-encoder architecture of Figure~\ref{fig:ista_ae}. \emph{The goal of the encoder is to produce sparse codes}~\cite{glorot2011deep}. We will construct a deep version of the auto-encoder and use it to infer desirable properties of the deep generative model.

\subsection{Deep, constrained, recurrent, sparse auto-encoders}

For simplicity, let us consider the case of two linear transformations $\A^{(1)}$ and $\A^{(2)}$. $\A^{(1)}$ is applied to a sparse code $\x$, and $\A^{(2)}$ to its output to generate an observation $\y$. Applied to $\y$, the goal of the ISTA$(\A^{(2)})$ encoder is to produce sparse codes. One scenario when this will happen is if $\A^{(1)}\x$, i.e., the image of $\A^{(1)}$ is sparse/approximately sparse.  
Under this scenario, The ISTA$(\A^{(1)})$ encoder would then use the sparse output of the ISTA$(\A^{(2)})$ encoder to produce an estimate $\hat{\x}$ of $\x$, completing the steps of the encoder. The decoder a then approximates $\y$ with $\hat{\y} = \A^{(2)}\A^{(1)}\hat{\x}$, leading to a two-layer generalization of the architecture from Figure~\ref{fig:unitary_ae}.

For the composition of more than two transformations, the requirement that the encoders applied in cascade produce sparse codes suggests that, starting with a sparse code, the output of each of the transformations, expect for the very last which gives the observations, must be approximately sparse. As pointed out by the authors in~\cite{aberdam2019multi}, this is a sufficient but not necessary condition. Different from our synthesis perspective, the authors take an analysis perspective that relies on the notion of cosparsity~\cite{nam2013cosparse}. We do not consider the cosparse setting.

We are in a position to specify our deep sparse generative model, along with the assumptions that accompany the model.

\subsection{Deep sparse generative model and coding}

Let $\mathbf{Y}$ be the $d_L \times n$ real-valued matrix obtained by applying the composition of $L$ linear transformations $\{\A^{(\ell)}\}_{\ell=1}^L$ to a matrix $\X$ of sparse codes
\begin{align}
	&  \mathbf{Y} = \A^{(L)}\cdots \A^{(2)}\A^{(1)} \mathbf{X}, 	\label{eq:deepdl}
 \\
	&  \mathbf{X} \in \R^{r_1 \times n}, \A^{(\ell)} \in \R^{d_\ell \times r_\ell}; \nonumber \\
	&  \forall \ell=1,\ldots,L-1, \text{all columns of }\A^{(\ell)} \text{ are }s_{(\ell)} \text{ sparse,}  \nonumber\\
	&  \text{and its nonzero entries uniformly bounded.} \nonumber
\end{align}
If we further assume that each column of $\mathbf{X}$ is $s$-sparse, i.e., at most $s$ of the entries of each column are nonzero, the image of each of the successive transformations $\{\A^{(\ell)}\}_{\ell=1}^{L-1}$ will also be sparse. Finally, we apply the transformation $\A^{(L)}$ to obtain the observations $\mathbf{Y}$. We let $\A^{(0)} = \X$, and $s = s_{(0)}$.

\noindent Given $\mathbf{Y}$, we would like solve the following problem
\begin{align}
	\min \limits_{\X,\{\A^{(\ell)}\}_{\ell=1}^L} & \vectornorm{\x_i}_1, \forall i=1,\ldots,n \nonumber\\
	& \text{ s.t. }  \mathbf{Y} = \A^{(L)}\cdots \A^{(2)}\A^{(1)} \mathbf{X}, \nonumber \vectornorm{\avec^{(\ell)}_j}_2 = 1, \nonumber \\
	& \forall j=1,\ldots,r_\ell; \forall \ell=1,\ldots,L
	\label{eq:deepdlpb}.
\end{align}

\subsubsection{Connection with shallow dictionary learning} If $\A^{(2)} = \A^{(3)} = \cdots = \A^{(L)} = \eye$, the deep sparse generative model of Equation~\ref{eq:deepdl} becomes the shallow sparse generative model of Equation~\ref{eq:shallowdl}. Therefore, the deep dictionary-learning problem of Equation~\ref{eq:deepdlpb} reduces to the shallow dictionary learning problem of Equation~\ref{eq:scopt}, a problem that is well-studied in the dictionary-learning literature~\cite{altmin}, and for which the authors from~\cite{altmin} propose an alternating-minimization procedure, Algorithm~\ref{algo:altminalgo}. In~\cite{altmin}, the authors show that, under certain assumptions, the output $\A(T)$ of Algorithm~\ref{algo:altminalgo} converges to $\A$. The three key assumptions for convergence are that $\A$ satisfies the restricted isometry property (RIP)~\cite{candes2008restricted}, the initialization $\A(0)$ is close to $\A$ (in a precise sense), and that $\{\epsilon_t\}$ forms a decreasing sequence.

\subsubsection{Reduction to a sequence of shallow problems} In the next section, we introduce two algorithms to learn the dictionaries $\{\A^{(\ell)}\}_{\ell=1}^L$ given $\Y$, by sequential application of Algorithm~\ref{algo:altminalgo}. We call the first algorithm the ``forward-factorization'' algorithm because it learns the dictionaries in a forward sequence, i.e., beginning with $\A^{(1)}$ and ending with $\A^{(L)}$. The second algorithm, termed ``backward-factorization'' algorithm, learns the dictionaries in a backward sequence.

In what follows, it will be useful to define two sets of matrices whose properties will be of interest in our theoretical analyses of the algorithms we introduce for deep dictionary learning. The first set of matrices comprises
\begin{equation}
	\A^{(\bar{\ell} \rightarrow L)} = \prod_{\ell=\bar{\ell}}^L \A^{(\ell)}, \forall \bar{\ell} = 1,\ldots,L.
\end{equation}
\noindent Given a layer index $\bar{\ell} = 1,\ldots,L$, $\A^{(\bar{\ell} \rightarrow L)}$ is the product all dictionaries from later $\bar{\ell}$ to layer $L$. In this notation, $\A^{(L \rightarrow L)} = \A^{(L)}$. The second set of matrices comprises
\begin{equation}
\mathbf{Y}^{(\ell)} = \prod_{\ell'=1}^\ell \A^{(\ell)}\X, \ell=1,\ldots,L-1.
\end{equation}
\noindent Given a layer index $\ell=1,\ldots,L-1$, $\mathbf{Y}^{(\ell)}$ is the output of the $\ell^\text{th}$ operator in Equation~\ref{eq:deepdl}. At depth $\ell$, the columns of $\mathbf{Y}^{(\ell)}$, $\ell=1,\ldots,L-1$ are sparse representations of the signal $\Y$, i.e., they are \emph{deeply sparse}. Each column of $\mathbf{Y}^{(\ell)}$ represent the hidden units/neurons at layer $\ell$ of the corresponding example, and its nonzero entries the active units/neurons. We let $\Y^{(0)} = \X$.



\subsection{Learning the deep sparse coding model by sequential alternating minimization}

\subsubsection{Forward-factorization algorithm}

Algorithm~\ref{algo:deepdlalgo2} specifies the steps of the forward-factorization algorithm. The procedure relies on the sequential application of Algorithm~\ref{algo:altminalgo}.

To gain some intuition, let us consider the case of $L=3$.  We first learn the product $\A^{(1 \rightarrow 3)} = \A^{(3)}\A^{(2)}\A^{(1)}$. Having learned this product, we then use it to learn $\A^{(2 \rightarrow 3)} = \A^{(3)}\A^{(2)}$, which automatically yields $\A^{(1)}$. Finally, we use $\A^{(3)}\A^{(2)}$ to learn $\A^{(2)}$ and $\A^{(3)}$. The sequence in which the forward-factorization algorithm learns the dictionaries is similar \emph{in spirit} to the multi-layer sparse approximation algorithm (Algorithm 1) from~\cite{le2015chasing}. For the dictionary update step,~\cite{le2015chasing} uses a projected gradient algorithm to enforce the constraints on the columns of the dictionaries, while we use least-squares, followed by normalization.

\begin{algorithm}
\KwIn{Samples $\mathbf{Y}$, number of levels $1 \leq \bar{\ell} \leq L$, initial dictionary estimates $\{\A^{(\ell \rightarrow L)}(0)\}_{\ell=1}^{\bar{\ell}}$, accuracy sequences $\{\epsilon^{(\ell \rightarrow L)}_t\}_{\ell=1}^{\bar{\ell}}$, sparsity levels $\{s_{(\ell-1)}\}_{\ell=1}^{\bar{\ell}}$. 
}
$\hat{\A}^{(0 \rightarrow L)} = \mathbf{Y}$\\
	$\left(\hat{\A}^{(1 \rightarrow L)},\hat{\X}\right) = \text{AltMinDict}(\hat{\A}^{(0 \rightarrow L)},\A^{(1 \rightarrow L)}(0),\epsilon^{(1 \rightarrow L)}_t,s,\infty)$\\
$\ell = 2$\\
\While{$\ell \leq \bar{\ell}$}{
	$\left(\hat{\A}^{(\ell \rightarrow L)},\hat{\A}^{(\ell-1)}\right) = \text{AltMinDict}(\sqrt{s_{(\ell-1)}}\hat{\A}^{(\ell-1 \rightarrow L)},\A^{(\ell \rightarrow L)}(0),\epsilon^{(\ell \rightarrow L)}_t,s_{(\ell-1)},\infty)$\\
	$\ell = \ell + 1$
}
\KwOut{$\{(\hat{\A}^{(\ell \rightarrow L)},\hat{\A}^{(\ell-1)})\}_{\ell=1}^{\bar{\ell}}$}
\caption{Forward-factorization algorithm for deep dictionary learning}
\label{algo:deepdlalgo2}
\end{algorithm}

We now state explicitly assumptions on the deep generative model of Equation~\ref{eq:deepdl} that will guarantee the success of Algorithm~\ref{algo:deepdlalgo2} for arbitrary $L$. 
The reader can compare these assumptions to assumptions A1--A7 from~\cite{altmin}. 

\begin{itemize}
	\item[(A1)] \textbf{Product Dictionary Matrices satisfying RIP}: For each $\ell=1,\ldots,L$, the product dictionary matrix $\A^{(\ell \rightarrow L)}$ has $2s_{(\ell-1)}$-RIP constant $\delta_{2s_{(\ell-1)}} < 0.1$.
	\item[(A2)] \textbf{Spectral Condition of Product Dictionary Matrices:} For each $\ell=1,\ldots,L$, the dictionary matrix $\A^{(\ell \rightarrow L)}$ has bounded spectral norm, i.e., for some constant $\mu_{(\ell \rightarrow L)} > 0$, $\vectornorm{\A^{(\ell \rightarrow L)}}_2 < \mu_{(\ell \rightarrow L)} \sqrt{\frac{r_\ell}{d_L}}$.
	\item[(A3a)] \textbf{Non-zero Entries of $\X$}: The non-zero entries of $\mathbf{X}$ are drawn i.i.d. from a distribution such that $\mathbb{E}[(\mathbf{X}_{ij})^2] = 1$, and satisfy the following a.s.: $|\mathbf{X}_{ij}| \leq M_{(0)}, \forall, i,j$.
	\item[(A3b)] \textbf{Non-zero Entries of $\{\A^{(\ell)}\}_{\ell=1}^{L-1}$}: $\forall \ell=1,\ldots,L-1$, the non-zero entries of $\A^{(\ell)}$ are drawn i.i.d. from a distribution with mean equal to zero, such that $\mathbb{E}[(\A^{(\ell)}_{ij})^2] = \frac{1}{s_{(\ell)}}$, and satisfy the following a.s.: $|\sqrt{s_{(\ell)}}\A^{(\ell)}_{ij}| \leq M_{(\ell)}, \forall, i,j$.	
	\item[(A4)] \textbf{Sparse Coefficient Matrix}: Recalling that $\X = \A^{(0)}$, $\forall \ell=0,\ldots,L-1$, the columns of $\A^{(\ell)}$ have $s_{(\ell)}$ non-zero entries which are selected uniformly at random from the set of all $s_{(\ell)}$-sized subsets of $\{1,\ldots,r_{\ell+1}\}$. We further require that, for $\forall \ell=0,\ldots,L-1$, $s_{(\ell)} \leq \frac{d_L^{1/6}}{c_2 {\mu_{(\ell+1 \rightarrow L)}}^{1/3}}$.
	\item[(A5a)] \textbf{Sample Complexity}: Given the failure parameter $\delta_{(1 \rightarrow L)} > 0$, the number of samples $n$ needs to satisfy
	\begin{equation}
	n = \mathcal{O} \left(\text{ max}(r_1^2,r_1 M^2_{(0)} s) \text{log}\left(\frac{2 r_1 }{\delta_{(1 \rightarrow L)}}\right)\right).
		\label{eq:cmpfwd}
	\end{equation}
	\item[(A5b)] \textbf{Scaling Law for Number of Hidden Units}: Given the failure parameters $\{\delta_{(\ell \rightarrow L)}\}_{\ell=2}^L > 0$, the size of the hidden layers $r_\ell$ needs to satisfy, $\forall \ell = 1,\ldots,L-1$	
	\begin{equation}
		r_{\ell} = \mathcal{O} \left(\text{ max}(r_{\ell+1}^2,r_{\ell+1} M^2_{(\ell)} s_{(\ell)}) \text{log}\left(\frac{2 r_{\ell+1} }{\delta_{(\ell+1 \rightarrow L)}}\right)\right).
	\end{equation}
	\item[(A6)] \textbf{Initial Dictionary with Guaranteed Error Bound}: It is assumed that, $\forall \ell = 1,\ldots,L$, we have access to an initial dictionary estimate $\A^{(\ell \rightarrow L)}$ such that 
	\begin{equation}
	\max_{i}\min_{z \in \{-1,1\}} \vectornorm{ z \A_i^{(\ell \rightarrow L)}(0)-\A_i^{(\ell \rightarrow L)} }_2 \leq \frac{1}{2592 s_{(\ell-1)}^2}.
	\end{equation}
	\item[(A7)] \textbf{Choice of Parameters for Alternating Minimization}: For all $\ell = 1,\ldots,L$, line 5 of Algorithm~\ref{algo:deepdlalgo2} uses a sequence of accuracy parameters $\epsilon^{(\ell \rightarrow L)}_0 = \frac{1}{2592s_{(\ell-1)}^2}$ and
	\begin{equation}
	\epsilon^{(\ell \rightarrow L)}_{t+1} = \frac{25050\mu_{(\ell \rightarrow L)} s_{(\ell-1)}^3}{\sqrt{d_L}}\epsilon^{(\ell \rightarrow L)}_t.
	\end{equation}
\end{itemize}

\noindent Comparing these assumptions to  A1--A7 from~\cite{altmin}, the least trivial one is assumption A1, namely that the product matrices $\{\A^{(\ell \rightarrow)}\}_{\ell=1}^L$ satisfy RIP~\cite{candes2008restricted}. The RIP was introduced for single matrices and not products of matrices. One of our contributions is to show that if the sparse matrices $\A^{(1)},\ldots,\A^{(L-1)}$, as well as the (potentially) dense matrix $\A^{(L)}$ satisfy RIP of some order, then the product matrices will satisfy RIP. Formally, 

\begin{thm}[RIP-like property of $\A^{(\bar{\ell} \rightarrow L)}$]
	Suppose that for each $\ell=1,\ldots,L$, the dictionary matrix $\A^{(\ell)}$ satisfies RIP of order $2s_{\mathbf{Y}^{(\ell-1)}}$ with  constant $\delta_{2s_{\mathbf{Y}^{(\ell-1)}}}$. Then $\forall \bar{\ell}=1,\ldots,L$ and $2s_{\mathbf{Y}^{(\bar{\ell}-1)}}$-sparse $\y^{(\bar{\ell}-1)}$, 
	\begin{eqnarray}
		& & \prod_{\ell=\bar{\ell}}^L (1-\delta_{2s_{\mathbf{Y}^{(\ell-1)}}}) \vectornorm{\y^{(\bar{\ell}-1)}}_2^2 
		\leq \nonumber \\ 
		\hspace{-0.25in} \vectornorm{\A^{(\bar{\ell} \rightarrow L)}\y^{(\bar{\ell}-1)}}_2^2 
		 & \leq & \prod_{\ell=\bar{\ell}}^L (1 +\delta_{2s_{\mathbf{Y}^{(\ell-1)}}}) \vectornorm{\y^{(\bar{\ell}-1)}}_2^2.
	\end{eqnarray}	
	\label{thm:prodrip}
\end{thm}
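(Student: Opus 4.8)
The plan is to prove Theorem~\ref{thm:prodrip} by downward induction on $\bar{\ell}$, peeling one dictionary factor off the product at a time. Two observations drive the argument. First, a \emph{sparsity-propagation} fact: because each $\A^{(\ell)}$, $\ell=1,\ldots,L-1$, has $s_{(\ell)}$-sparse columns, applying $\A^{(\ell)}$ to a $2s_{\mathbf{Y}^{(\ell-1)}}$-sparse vector yields a vector whose support is contained in a union of at most $2s_{\mathbf{Y}^{(\ell-1)}}$ column-supports, each of size at most $s_{(\ell)}$; since the layer sparsity levels satisfy $s_{\mathbf{Y}^{(\ell)}} \ge s_{(\ell)} s_{\mathbf{Y}^{(\ell-1)}}$, this image is $2s_{\mathbf{Y}^{(\ell)}}$-sparse. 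Second, the factors $1 \pm \delta_{2s_{\mathbf{Y}^{(\ell-1)}}}$ are all strictly positive (RIP constants are below $1$), so one-step near-isometry inequalities may be multiplied together without reversing.

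\textbf{Base case ($\bar{\ell}=L$).} Here $\A^{(L\rightarrow L)} = \A^{(L)}$, which by hypothesis satisfies RIP of order $2s_{\mathbf{Y}^{(L-1)}}$ with constant $\delta_{2s_{\mathbf{Y}^{(L-1)}}}$. Since $\y^{(L-1)}$ is $2s_{\mathbf{Y}^{(L-1)}}$-sparse, the asserted chain of inequalities is exactly the definition of RIP, the product over $\ell$ collapsing to the single term $\ell=L$.

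\textbf{Inductive step.} Assume the statement holds for $\bar{\ell}+1$. Fix a $2s_{\mathbf{Y}^{(\bar{\ell}-1)}}$-sparse vector $\y^{(\bar{\ell}-1)}$ and set $\z := \A^{(\bar{\ell})}\y^{(\bar{\ell}-1)}$, so that $\A^{(\bar{\ell}\rightarrow L)}\y^{(\bar{\ell}-1)} = \A^{(\bar{\ell}+1\rightarrow L)}\z$. By the sparsity-propagation fact, $\z$ is $2s_{\mathbf{Y}^{(\bar{\ell})}}$-sparse, so the inductive hypothesis applies to it. Using the RIP of $\A^{(\bar{\ell})}$ on $\y^{(\bar{\ell}-1)}$ we get
\[
(1-\delta_{2s_{\mathbf{Y}^{(\bar{\ell}-1)}}})\vectornorm{\y^{(\bar{\ell}-1)}}_2^2 \le \vectornorm{\z}_2^2 \le (1+\delta_{2s_{\mathbf{Y}^{(\bar{\ell}-1)}}})\vectornorm{\y^{(\bar{\ell}-1)}}_2^2,
\]
and from the inductive hypothesis applied to $\z$,
\[
\prod_{\ell=\bar{\ell}+1}^L (1-\delta_{2s_{\mathbf{Y}^{(\ell-1)}}})\vectornorm{\z}_2^2 \le \vectornorm{\A^{(\bar{\ell}\rightarrow L)}\y^{(\bar{\ell}-1)}}_2^2 \le \prod_{\ell=\bar{\ell}+1}^L (1+\delta_{2s_{\mathbf{Y}^{(\ell-1)}}})\vectornorm{\z}_2^2.
\]
Substituting the bounds on $\vectornorm{\z}_2^2$ into the last display — valid because all factors are positive — and folding the $\ell=\bar{\ell}$ term back into the product yields the claim for $\bar{\ell}$, closing the induction.

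The only step that is not purely mechanical is the sparsity bookkeeping inside the inductive step: one must be certain that $\z=\A^{(\bar{\ell})}\y^{(\bar{\ell}-1)}$ has support no larger than $2s_{\mathbf{Y}^{(\bar{\ell})}}$, for otherwise the RIP of $\A^{(\bar{\ell}+1)}$ (hence the inductive hypothesis at layer $\bar{\ell}+1$) would not be available. This is exactly the place where the structural hypothesis that $\A^{(1)},\ldots,\A^{(L-1)}$ have sparse columns is used, and also why the outermost factor $\A^{(L)}$ is permitted to be dense — no further matrix is applied to its output. Everything else, namely the base case and the multiplication of the one-step inequalities, is routine.
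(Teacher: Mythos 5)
Your proposal is correct and follows essentially the same route as the paper: downward induction on $\bar{\ell}$ with base case $\bar{\ell}=L$, peeling off the factor $\A^{(\bar{\ell})}$, using its RIP on $\y^{(\bar{\ell}-1)}$, and applying the inductive hypothesis to the image, which the sparsity of the columns of $\A^{(\bar{\ell})}$ guarantees is $2s_{\mathbf{Y}^{(\bar{\ell})}}$-sparse. Your explicit bookkeeping of why the image has the required sparsity ($s_{\mathbf{Y}^{(\bar{\ell})}} = s_{(\bar{\ell})} s_{\mathbf{Y}^{(\bar{\ell}-1)}}$) makes precise a step the paper's proof only asserts, but the argument is the same.
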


We prove the theorem, and make further remarks in the appendix. 

The theorem assumes that there exists matrices whose columns have a \emph{fixed} sparsity level, and hence dependent entries, that satisfy RIP (assumption A1). In Section~\ref{sec:sprand}, we show that, with high probability, sparse random matrices that obey assumption A3b and A4 satisfy RIP. We will appeal to standard results from random matrix theory~\cite{vershynin2010introduction}. From the analysis in Section~\ref{sec:sprand}, assumption A2 follows from the concentration of the singular values of such matrices. 

We now state our main result on the ability of the forward-factorization algorithm, Algorithm~\ref{algo:deepdlalgo2}, to recover the sequence of dictionaries from the deep sparse generative model.

\begin{thm}[Exact recovery of the deep generative model by Algorithm~\ref{algo:deepdlalgo2}]
Let $E_{\ell \rightarrow L}$ denote the event $ \{  \hat{\A}^{(\ell \rightarrow L)} = \A^{(\ell \rightarrow L)}\}$, $\ell = 1,\ldots,L$. Let $1 \leq \bar{\ell} \leq L$, then $\forall \bar{\ell}$
	\begin{equation}
		\mathbb{P}[\cap_{\ell=1}^{\bar{\ell}} E_{\ell \rightarrow L}] 
		 \geq \prod_{\ell=1}^{\bar{\ell}} (1-2\delta_{(\ell \rightarrow L)}).
		\label{eq:exactdictreco2}
	\end{equation}
	\label{thm:mainres2}
\end{thm}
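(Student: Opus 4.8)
The plan is to run an induction on $\bar{\ell}$ that, at each step, recognizes one stage of Algorithm~\ref{algo:deepdlalgo2} as a single instance of the shallow sparse-coding/dictionary-learning problem of Equation~\ref{eq:scopt} and invokes the exact-recovery theorem of~\cite{altmin} for Algorithm~\ref{algo:altminalgo}, whose hypotheses our assumptions A1--A7 are tailored to meet. The reduction rests on the factorization $\A^{(\ell-1 \rightarrow L)} = \A^{(\ell \rightarrow L)}\A^{(\ell-1)}$ together with the rescaling in line~5 of Algorithm~\ref{algo:deepdlalgo2}: if stages $1,\ldots,\ell-1$ have succeeded, i.e., $\hat{\A}^{(\ell-1 \rightarrow L)} = \A^{(\ell-1 \rightarrow L)}$, then the $\ell$-th call to AltMinDict receives
\[
\sqrt{s_{(\ell-1)}}\,\A^{(\ell-1 \rightarrow L)} \;=\; \A^{(\ell \rightarrow L)}\big(\sqrt{s_{(\ell-1)}}\,\A^{(\ell-1)}\big),
\]
which is a shallow sparse-coding model with dictionary $\A^{(\ell \rightarrow L)}$ and code matrix $\sqrt{s_{(\ell-1)}}\,\A^{(\ell-1)}$, whose columns are $s_{(\ell-1)}$-sparse with uniformly random support by A4. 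For $\ell=1$ the input is $\Y = \A^{(1 \rightarrow L)}\X$ and the codes are $\X$.

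First I would check that A1--A7 supply, stage by stage, exactly the hypotheses~\cite{altmin} requires for this instance. The dictionary $\A^{(\ell \rightarrow L)} \in \R^{d_L \times r_\ell}$ has $2s_{(\ell-1)}$-RIP constant below $0.1$ by A1 (which may itself be read off Theorem~\ref{thm:prodrip} together with the random-matrix analysis of Section~\ref{sec:sprand}) and bounded spectral norm by A2. The rescaled code entries have second moment $s_{(\ell-1)}\cdot\frac{1}{s_{(\ell-1)}}=1$, mean zero, and are a.s.\ bounded by $M_{(\ell-1)}$ (by A3b, and by A3a when $\ell=1$), their supports are uniform over $s_{(\ell-1)}$-subsets, and $s_{(\ell-1)} \le d_L^{1/6}/(c_2\,\mu_{(\ell \rightarrow L)}^{1/3})$, all by A4. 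Crucially, the number of ``samples'' processed at stage $\ell$ is the number of columns of its input, namely $r_{\ell-1}$ (with the convention $r_0:=n$), so the sample-complexity requirement of~\cite{altmin} becomes exactly A5a when $\ell=1$ and A5b with running index $\ell-1$ when $\ell\ge2$. Finally A6 furnishes an initial estimate within the basin $1/(2592 s_{(\ell-1)}^2)$ and A7 the geometric accuracy schedule with ratio $25050\,\mu_{(\ell \rightarrow L)} s_{(\ell-1)}^3/\sqrt{d_L}$; since line~5 iterates to convergence ($T=\infty$) and~\cite{altmin} proves geometric convergence of the dictionary iterates, the limit equals $\A^{(\ell \rightarrow L)}$ exactly, the per-column sign/permutation ambiguity being pinned down by the closeness condition A6. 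Hence, \emph{conditioned on receiving the correct input}, stage $\ell$ returns $\A^{(\ell \rightarrow L)}$ with probability at least $1-2\delta_{(\ell \rightarrow L)}$, the only randomness being that of the code matrix $\A^{(\ell-1)}$ (resp.\ $\X$), which by the i.i.d.\ construction is independent of $\A^{(\ell)},\ldots,\A^{(L)}$.

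To assemble the stagewise guarantees into the product bound, I would not condition directly on the events $E_{\ell \rightarrow L}$, which are entangled because each random factor $\A^{(\ell-1)}$ sits inside the product dictionaries $\A^{(j \rightarrow L)}$ recovered at all earlier stages. Instead, let $C_\ell$ be the event that AltMinDict at stage $\ell$, \emph{fed the correct input} (namely $\Y$ for $\ell=1$ and $\sqrt{s_{(\ell-1)}}\,\A^{(\ell-1\rightarrow L)}$ for $\ell\ge2$), returns $\A^{(\ell \rightarrow L)}$; then $C_\ell$ depends only on $\A^{(\ell-1)},\A^{(\ell)},\ldots,\A^{(L)}$ (on $\X,\A^{(1)},\ldots,\A^{(L)}$ for $\ell=1$). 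A short induction using the factorization above shows $\cap_{\ell=1}^{\bar{\ell}} C_\ell \subseteq \cap_{\ell=1}^{\bar{\ell}} E_{\ell \rightarrow L}$: success on the correct input at every stage forces the actual inputs to be correct, since $E_{\ell-1\rightarrow L}$ makes $\hat{\A}^{(\ell-1\rightarrow L)}=\A^{(\ell-1\rightarrow L)}$. It therefore suffices to lower-bound $\mathbb{P}[\cap_{\ell=1}^{\bar{\ell}} C_\ell]$. Peeling with the decreasing filtration $\mathcal{G}_\ell = \sigma(\A^{(\ell)},\ldots,\A^{(L)})$: given $\mathcal{G}_\ell$, the events $C_{\ell+1},\ldots,C_{\bar{\ell}}$ are determined, while the only remaining randomness in $C_\ell$ is the code matrix $\A^{(\ell-1)}$ (resp.\ $\X$), which is independent of $\mathcal{G}_\ell$; hence $\mathbb{P}[C_\ell \mid \mathcal{G}_\ell] \ge 1-2\delta_{(\ell \rightarrow L)}$ almost surely by the previous paragraph. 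The tower property and a downward induction on $\ell$ then give $\mathbb{P}[\cap_{\ell=1}^{\bar{\ell}} C_\ell] \ge \prod_{\ell=1}^{\bar{\ell}}(1-2\delta_{(\ell \rightarrow L)})$, which is Equation~\ref{eq:exactdictreco2}.

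I expect the main obstacle to be exactly this last probabilistic bookkeeping. The stagewise recovery events are not independent, and the naive chain-rule estimate $\mathbb{P}[E_{\ell \rightarrow L}\mid E_{1\rightarrow L},\ldots,E_{\ell-1\rightarrow L}] \ge 1-2\delta_{(\ell\rightarrow L)}$ is not immediately legitimate, because the same random sparse matrix $\A^{(\ell-1)}$ serves as the codes at stage $\ell$ and also appears as a factor of every dictionary recovered at the earlier stages. Cleanly isolating the fresh randomness at each stage --- via the counterfactual events $C_\ell$ and the nested filtration, leaning on the mutual independence of $\X,\A^{(1)},\ldots,\A^{(L)}$ --- is what makes the product form $\prod_\ell(1-2\delta_{(\ell \rightarrow L)})$, rather than a weaker union bound, come out. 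A secondary and more routine difficulty is the careful alignment of parameters (dimensions, sparsity levels, second moments, the $\sqrt{s_{(\ell-1)}}$ rescaling, and the identification of the stage-$\ell$ sample size with $r_{\ell-1}$) so that A1--A7 match the hypotheses of~\cite{altmin} term by term.
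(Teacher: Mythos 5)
Your proposal is correct and follows the same basic strategy as the paper: induct on $\bar{\ell}$, use the factorization $\A^{(\bar{\ell} \rightarrow L)} = \A^{(\bar{\ell}+1 \rightarrow L)}\A^{(\bar{\ell})}$ together with the $\sqrt{s_{(\bar{\ell})}}$ rescaling to recognize each stage as a shallow sparse-coding instance with unit-variance codes, and invoke Theorem 3.1 of~\cite{altmin} stage by stage under A1--A7. Where you differ is in how the stagewise guarantees are combined. The paper simply conditions on $\cap_{\ell=1}^{\bar{\ell}} E_{\ell \rightarrow L}$ and asserts $\mathbb{P}[E_{\bar{\ell}+1 \rightarrow L} \mid \cap_{\ell=1}^{\bar{\ell}} E_{\ell \rightarrow L}] \geq 1-2\delta_{(\bar{\ell}+1 \rightarrow L)}$ by a direct appeal to~\cite{altmin}, leaving implicit the point you flag: the codes $\A^{(\bar{\ell})}$ at stage $\bar{\ell}+1$ also enter the earlier recovery events, so the conditional distribution of the codes is not a priori the one under which the shallow guarantee is stated. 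Your device of counterfactual success events $C_\ell$ (success on the \emph{correct} input), the containment $\cap_\ell C_\ell \subseteq \cap_\ell E_{\ell \rightarrow L}$, and the peeling via the filtration $\sigma(\A^{(\ell)},\ldots,\A^{(L)})$ with the tower property makes this step rigorous, exploiting that in~\cite{altmin} the dictionary is fixed and the probability is over the codes alone; it buys a cleaner justification of the product form at the cost of a little extra bookkeeping, and your identification of the stage-$\ell$ sample size with $r_{\ell-1}$ (so that A5a/A5b supply the sample-complexity hypotheses) matches the paper's intent.
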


\noindent The proof of the theorem is in the appendix.

If we apply the theorem with $\bar{\ell} = L$, it states that, with the given probability, we can learn all of the product matrices in the deep sparse generative model, and hence all $L$ dictionaries. Assumption A5a is a statement about the complexity of this learning. i.e., the amount of data required: the computational complexity is $\mathcal{O} \left(\text{ max}(r_1^2,r_1 M^2_{(0)} s) \text{log}\left(\frac{2 r_1 }{\delta_{(1 \rightarrow L)}}\right)\right)$. In neural network terminology, $r_1$ is the dimension of the deepest layer's output, i.e., the number of hidden units in that layer. The sparsity $s$ represents the (target) number of active units at that layer. Assumption A5b states that a sufficient condition for the forward-factorization algorithm to succeed is that $r_{\ell} = \mathcal{O} \left(\text{ max}(r_{\ell+1}^2,r_{\ell+1} M^2_{(\ell)} s_{(\ell)}) \text{log}\left(\frac{2 r_{\ell+1} }{\delta_{(\ell+1 \rightarrow L)}}\right)\right)$. In other words, the number of hidden units at given layer ought to be a function of the number of hidden units at the preceding layer.

Together, we can interpret the assumptions of the deep sparse generative model, as well as assumptions A5a and A5b, as \emph{prescriptions} for how to design deep versions of the auto-encoders from Section~\ref{sec:shallow}. We can summarize these prescriptions as follows
\begin{enumerate}
	\item The user first picks the number of hidden units $r_L$ in the first layer. This layer can have more units than the dimension $d_L$ of the data. The matrix of weights connecting the data to the hidden units from the first layer can be dense.
	\item Following the first layer, the network ought to expand. That is, the user ought to progressively increase the number of hidden units at all following layers according to assumption A5b. Moreover, the matrices of weights connecting hidden units from consecutive pairs of layers ought to be sparse.
	\item Finally, according to assumption A5a, the user ought to use training data of size proportional to the number of hidden units in the deepest layer.
\end{enumerate}
Similar to~\cite{altmin}, our simulations (Section~\ref{sec:sims}) suggest that quadratic terms in assumptions A5a and A5b are artifacts of the techniques from~\cite{altmin} for proving the convergence of Algorithm~\ref{algo:altminalgo} in the shallow case. Specifically, the second-order terms come from the proof of Lemma A.6 in~\cite{altmin}. That is, the simulations suggest that the learning complexity depends on product of the number of active neurons in the deepest layer and the number of hidden units in that layer. Moreover, after picking the number of hidden units in the first layer, the simulations suggest that the number of hidden units at a given layer ought to increase linearly with both the target number of active units in that layer (sparsity), and the number of hidden units in its input layer. An interesting line of research would be to improve the bounds from Lemma A.6 in~\cite{altmin}.

Assumptions A6 states that the error bound of the sparse-coding step ought to decrease as a function of iterations: this is natural if indeed the estimates of the dictionaries converge to the true dictionaries, so should the error of the sparse coding step, which comes from using the ``wrong'' dictionary. Finally, assumption A7 states how close we should initialize dictionaries as a function of the model parameters.

\subsubsection{Backward-factorization algorithm}

Algorithm~\ref{algo:deepdlalgo} specifies the steps of the backward-factorization algorithm. 

Let us gain some intuition by considering the case of $L=2$. First, we learn $\A^{(2)}$ and $\mathbf{Y}^{(1)}$ using Algorithm~\ref{algo:altminalgo} which, under regularity assumptions, guarantees that, with high probability,  $(\hat{\A}^{(2)},\hat{\mathbf{Y}}^{(1)}) = ({\A}^{(2)},{\mathbf{Y}}^{(1)})$. Then, we solve for $\A^{(1)}$ and $\mathbf{X}$ similarly. Appealing to Theorem 3.1 from~\cite{altmin}, we can conclude that, with high probability, we have solved for $\A^{(2)}$, $\A^{(1)}$ and $\mathbf{X}$. The second step assumes that the sparse matrix $\A^{(1)}$ satisfies RIP. We show this in Section~\ref{sec:sprand}.

\begin{algorithm}
\KwIn{Samples $\mathbf{Y}$, number of levels $1 \leq \bar{\ell} \leq L$, initial dictionary estimates $\{\A^{(\ell)}(0)\}_{\ell=\bar{\ell}}^L$, accuracy sequences $\{\epsilon^{(\ell)}_t\}_{\ell=\bar{\ell}}^L$, sparsity levels $\{s_{\mathbf{Y}^{(\ell-1)}}\}_{\ell=\bar{\ell}}^L$, scaling sequence $\{\sigma_{(0 \rightarrow \ell-1)}\}_{\ell=\bar{\ell}}^{L}$. 
}
$\hat{\mathbf{Y}}^{(L)} = \mathbf{Y}$\\
$\ell = L$\\
\While{$\ell \geq \bar{\ell}$}{
	$\left(\hat{\A}^{(\ell)},\hat{\mathbf{Y}}^{(\ell-1)}\right) = \text{AltMinDict}(\sigma_{(0 \rightarrow \ell-1)}\hat{\mathbf{Y}}^{(\ell)},\A^{(\ell)}(0),\epsilon^{(\ell)}_t,s_{\mathbf{Y}^{(\ell-1)}},\infty)$\\
	$\ell = \ell - 1$
}
\KwOut{$\{\hat{\A}^{(\ell)},\hat{\mathbf{Y}}^{(\ell-1)}\}_{\ell=\bar{\ell}}^{L}$}
\caption{Backward-factorization algorithm for deep dictionary learning}
\label{algo:deepdlalgo}
\end{algorithm}

We now state assumptions on the deep generative model of Equation~\ref{eq:deepdl} guaranteeing that Algorithm~\ref{algo:deepdlalgo} can successfully recover all the dictionaries. 


\paragraph{Assumptions:} Let $\mathbf{Y}^{(0)} = \mathbf{X}$, $s_{\mathbf{Y}^{(0)}} = s$, and $\forall \ell=1,\ldots,L$, $s_{\mathbf{Y}^{(\ell)}} = s \prod_{\ell'=1}^\ell s_{(\ell)}$, $\ell = 1,\ldots,L-1$. 

\noindent Further define the scalar sequence $\{\sigma_{(0 \rightarrow \ell)}\}_{\ell=0}^{L-1}$ as follows
\begin{eqnarray}
	\sigma_{(0 \rightarrow 0)} & = & 1, \\
	\sigma_{(0 \rightarrow \ell)} & = & \prod_{\ell'=0}^\ell \sqrt{s_{(\ell')}}.
\end{eqnarray}
\noindent For each $\ell=1,\ldots,L-1$, the purpose of $\sigma_{(0\rightarrow \ell)}$ is to scale the sparse matrix $\Y^{(\ell)}$ so that its nonzero entries have variance centered around $1$ (we detail this in the appendix, where we sketch of the proof of convergence of the algorithm).

\begin{itemize}
	\item[(B1)] \textbf{Dictionary Matrices satisfying RIP}: For each $\ell=1,\ldots,L$, the dictionary matrix $\A^{(\ell)}$ has $2s_{\mathbf{Y}^{(\ell-1)}}$-RIP constant of $\delta_{2s_{\mathbf{Y}^{(\ell-1)}}} < 0.1$.
	\item[(B2a)] \textbf{Spectral Condition of all Dictionaries:} For each $\ell=1,\ldots,L$, the dictionary matrix $\A^{(\ell)}$ has bounded spectral norm, i.e., for some constant $\mu_{(\ell)} > 0$, $\vectornorm{\A^{(\ell)}}_2 < \mu_{(\ell)} \sqrt{\frac{r_\ell}{d_\ell}}$.
	\item[(B2b)] \textbf{Spectral Condition of Sparse Dictionaries:} For each $\ell=1,\ldots,L-1$, the transpose of the dictionary matrix $\A^{(\ell)}$ has bounded smallest singular value, i.e., for some constant $\tilde{\mu}_{(\ell)} > 0$, $\sigma_{\text{min}}({\A^{(\ell)\text{T}}}) > \tilde{\mu}_{(\ell)} \sqrt{\frac{r_\ell}{d_\ell}}$.
	\item[(B2c)] \textbf{Spectral Condition of Indicator Matrices of Sparse Dictionaries:} For each $\ell=1,\ldots,L-1$, the spectral norm of the matrix $\U^{(\ell)}$ of indicator values of the dictionary matrix $\A^{(\ell)}$ satisfies $\vectornorm{\U^{(\ell)}}_2 \leq 2 \sqrt{\frac{s_{(\ell)}^2 r_\ell}{r_{\ell+1}}}$.	
		
	\item[(B3)] \textbf{Non-zero Entries in Coefficient Matrix}: The non-zero entries of $\mathbf{X}$ are drawn i.i.d. from a distribution with \emph{mean zero} such that $\mathbb{E}[(\mathbf{X}_{ij})^2] = 1$, and satisfy the following a.s.: $|\mathbf{X}_{ij}| \leq M_{(0)}, \forall, i,j$.
	\item[(B4)] \textbf{Sparse Coefficient Matrix}: The columns of the coefficient matrix have $s$ non-zero entries which are selected uniformly at random from the set of all $s$-sized subsets of $\{1,\ldots,r_1\}$. It is required that $s \leq \frac{d_1^{1/6}}{c_2 {\mu_{(1)}}^{1/3}}$, for some universal constant $c_2$. We further require that, for $\forall \ell=1,\ldots,L-1$, $s_{\mathbf{Y}^{(\ell)}} \leq \frac{d_{\ell+1}^{1/6}}{c_2 {\mu_{(\ell+1)}}^{1/3}}$.
	\item[(B5)] \textbf{Sample Complexity}: Given failure parameters $\{\delta_\ell\}_{\ell=1}^L > 0$, the number of samples $n$ needs to be
\begin{eqnarray}
	\mathcal{O}  \Big(\max_\ell & & \!\!\!\!\!\!\!\!\!\!\! \text{ max } \Big( r_1 r_\ell \frac{s_{\mathbf{Y}^{(\ell-1)}}}{s} \text{log}(\frac{2r_1}{\delta_\ell}), \nonumber \\
	& & r_\ell M^2_{\Y^{(\ell-1)}} s_{\mathbf{Y}^{(\ell-1)}}\text{log}(\frac{2r_\ell}{\delta_\ell}) \Big)\Big).
	\label{eq:cmp3}
\end{eqnarray}
	Here $|\sigma_{(0 \rightarrow \ell)} \mathbf{Y}^{(\ell)}_{ij}| \leq M_{\Y^{(\ell-1)}}$, $\ell=1,\ldots,L-1$, and $|\mathbf{Y}^{(0)}_{ij}| = |\mathbf{X}_{ij}|  \leq M_{(0)}$.
	\item[(B6)] \textbf{Initial Dictionary with Guaranteed Error Bound}: It is assumed that, $\forall \ell = 1,\ldots,L$, we have access to an initial dictionary estimate $\A^{(\ell)}$ such that 
	\begin{equation}
	\max_{i \in \{1,\ldots,r_\ell\}}\min_{z \in \{-1,1\}} \vectornorm{ z \A_i^{(\ell)}(0)-\A_i^{(\ell)} }_2 \leq \frac{1}{2592 s_{\mathbf{Y}^{(\ell-1)}}^2}.
	\end{equation}
	\item[(B7)] \textbf{Choice of Parameters for Alternating Minimization}: For all $\ell = 1,\ldots,L$, line 4 of Algorithm~\ref{algo:deepdlalgo} uses a sequence of accuracy parameters $\epsilon^{(\ell)}_0 = \mathcal{O}\left(\frac{1}{2592s_{\mathbf{Y}^{(\ell-1)}}^2}\right)$ and
	\begin{equation}
	\epsilon^{(\ell)}_{t+1} = \frac{25050\mu_\ell s_{\mathbf{Y}^{(\ell-1)}}^3}{\sqrt{d_\ell}}\epsilon^{(\ell)}_t.
	\end{equation}
\end{itemize}

We are now in a position to state a theorem on the ability of the backward-factorization algorithm to learn the deep generative model of Equation~\ref{eq:deepdl}, i.e., recover $\{\A^{(\ell)}\}_{\ell=1}^L$ under assumptions B1--B7.

%

\begin{thm}[Exact recovery of the deep generative model by Algorithm~\ref{algo:deepdlalgo}]
Let us denote by $E_\ell$ the event $ \{  \hat{\A}^{(\ell)} = \A^{(\ell)}\}$, $\ell = 1,\ldots,L$. Let $1 \leq \bar{\ell} \leq L$, then $\forall \bar{\ell}$
	\begin{equation}
		\mathbb{P}[\cap_{\ell=\bar{\ell}}^L E_\ell] \geq \prod_{\ell=\bar{\ell}}^L (1-2\delta_\ell).
		\label{eq:exactdictreco}
	\end{equation}
	\label{thm:mainres}
\end{thm}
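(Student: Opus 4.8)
The plan is to peel the deep model one layer at a time, from the shallowest layer ($\ell = L$) inward, recognizing that each iteration of Algorithm~\ref{algo:deepdlalgo} is an instance of the shallow dictionary-learning problem~\eqref{eq:scopt}, and then to invoke the convergence guarantee for \textsc{AltMinDict} (Theorem~3.1 of~\cite{altmin}) at each layer, chaining the resulting conditional probabilities. Concretely, write $\mathbf{Y}^{(\ell)} = \A^{(\ell)}\mathbf{Y}^{(\ell-1)}$ and observe that, by (B4) together with the $s_{(\ell')}$-sparsity of the columns of the $\A^{(\ell')}$, each column of $\mathbf{Y}^{(\ell-1)}$ has support of size at most $s_{\mathbf{Y}^{(\ell-1)}} = s\prod_{\ell'=1}^{\ell-1} s_{(\ell')}$ (this is an upper bound on the support regardless of whether cancellations occur among nonzero values, which is all that the RIP hypothesis of order $2s_{\mathbf{Y}^{(\ell-1)}}$ requires). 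Hence, conditioned on the event that all earlier iterations succeeded --- i.e.\ on the event that $\hat{\mathbf{Y}}^{(\ell)}$ equals the true $\mathbf{Y}^{(\ell)}$ up to the known scaling --- iteration $\ell$ is precisely \textsc{AltMinDict} run on the shallow model with dictionary $\A^{(\ell)}$ and $s_{\mathbf{Y}^{(\ell-1)}}$-sparse code matrix $\sigma_{(0\rightarrow \ell-1)}\mathbf{Y}^{(\ell-1)}$, the scaling $\sigma_{(0\rightarrow \ell-1)} = \prod_{\ell'=0}^{\ell-1}\sqrt{s_{(\ell')}}$ being chosen so that the nonzero code entries have second moment near $1$.

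The second step is to verify, layer by layer, that the hypotheses of Theorem~3.1 of~\cite{altmin} hold for this shallow instance. The RIP hypothesis on the dictionary at order $2s_{\mathbf{Y}^{(\ell-1)}}$ is exactly (B1); the spectral-norm hypothesis is (B2a); the near-to-truth initialization and the geometrically decaying accuracy schedule are (B6) and (B7), calibrated so that $\epsilon^{(\ell)}_0$ and the contraction factor match what the shallow proof requires at sparsity $s_{\mathbf{Y}^{(\ell-1)}}$ and dimension $d_\ell$; and the sparsity-vs-dimension bound $s_{\mathbf{Y}^{(\ell-1)}} \le d_\ell^{1/6}/(c_2\mu_{(\ell)}^{1/3})$ is (B4). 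The one hypothesis that is not immediate is the one on the code matrix: in~\cite{altmin} the nonzero code entries are i.i.d.\ with unit second moment on a uniformly random support, whereas the nonzero entries of $\mathbf{Y}^{(\ell-1)}$ for $\ell-1 \ge 1$ are correlated sums of products of the random entries of $\A^{(1)},\dots,\A^{(\ell-1)}$ and $\mathbf{X}$, and their support is the image of a uniformly random support under the (sparse) patterns of $\A^{(1)},\dots,\A^{(\ell-1)}$. This is where (B2b), (B2c) and (B3)--(B4) enter: the zero-mean, variance-$1/s_{(\ell')}$ structure of the $\A^{(\ell')}$ makes $\sigma_{(0\rightarrow\ell-1)}\mathbf{Y}^{(\ell-1)}$ have nonzero entries with second moment concentrated at $1$ and bounded by $M_{\mathbf{Y}^{(\ell-1)}}$ (B5); the spectral bounds on the sparse factors (B2b) and on their indicator matrices $\U^{(\ell)}$ (B2c) control the conditioning of the induced code matrix and the spread of the induced support; and (B4) keeps that support sparse enough relative to $d_\ell$. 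One then argues that the law of $\sigma_{(0\rightarrow\ell-1)}\mathbf{Y}^{(\ell-1)}$ satisfies a sufficient set of conditions under which the proof of Theorem~3.1 of~\cite{altmin} goes through, the required sample size being the layerwise maximum in~\eqref{eq:cmp3}; this is the part deferred to the appendix.

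Finally, the probability bound follows by chaining conditional probabilities. Let $F_\ell$ denote the event that iteration $\ell$ recovers both $\A^{(\ell)}$ exactly and the scaled code matrix $\sigma_{(0\rightarrow\ell-1)}\mathbf{Y}^{(\ell-1)}$ exactly; then $F_\ell \subseteq E_\ell$, and on $\cap_{\ell'=\ell+1}^{L} F_{\ell'}$ iteration $\ell$ is fed the correct input. By the two previous steps, $\mathbb{P}[\,F_\ell \mid \cap_{\ell'=\ell+1}^{L} F_{\ell'}\,] \ge 1 - 2\delta_\ell$ for each $\ell = \bar{\ell},\dots,L$ (the case $\ell = L$ being unconditioned, since $\hat{\mathbf{Y}}^{(L)} = \mathbf{Y}$ exactly). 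Multiplying over $\ell$ gives $\mathbb{P}[\cap_{\ell=\bar{\ell}}^{L} E_\ell] \ge \mathbb{P}[\cap_{\ell=\bar{\ell}}^{L} F_\ell] \ge \prod_{\ell=\bar{\ell}}^{L}(1 - 2\delta_\ell)$, which is~\eqref{eq:exactdictreco}.

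The main obstacle is the middle step. Unlike the forward algorithm, the backward algorithm never invokes Theorem~\ref{thm:prodrip} --- each iteration involves a single $\A^{(\ell)}$ rather than a product --- but in exchange one must show that the \emph{deeply sparse} intermediate matrices $\mathbf{Y}^{(\ell-1)}$, whose nonzero entries and supports are \emph{induced} rather than i.i.d., still behave after rescaling like admissible sparse-code matrices for the shallow analysis of~\cite{altmin}. Quantifying the departure from the i.i.d.\ model --- correlations among nonzero entries, non-uniformity of the induced support, second moments only ``centered around'' $1$ --- and showing it is absorbed by the slack in the constants of~\cite{altmin} together with the enlarged sample budget~\eqref{eq:cmp3}, is the crux, and is precisely why the backward algorithm requires more data than the forward one.
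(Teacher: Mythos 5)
Your proposal follows essentially the same route as the paper's own (sketch of) proof: peel layers backward, recondition each iteration as a shallow instance of Theorem~3.1 of~\cite{altmin} applied to $\sigma_{(0\rightarrow\ell-1)}\mathbf{Y}^{(\ell-1)}$, verify its hypotheses via B1--B7 (with the induced, non-i.i.d.\ code matrix being the acknowledged crux, handled through B2b, B2c, B3 and the scaling), and chain conditional probabilities to obtain the product bound. Your bookkeeping with the events $F_\ell$ (success including the recovered code matrix, with $F_\ell \subseteq E_\ell$) is a slightly cleaner rendering of the conditioning the paper performs directly on $\cap_\ell E_\ell$, but the argument is the same.
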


The Theorem states that, with the given probability, we can learn all of the $L$ transformations in the deep sparse generative model. Random matrices satisfy assumptions B1, B2a, B2b with high probability. This is well known for dense matrices. Section~\ref{sec:sprand} shows this for a class of sparse random matrices with fixed column sparsity. Assumption B2c also holds with high probability for these matrices (Lemma A.1 from~\cite{altmin}). Compared to the proof of Theorem~\ref{thm:mainres2}, the proof of Theorem~\ref{thm:mainres} is less immediate. We give a sketch of the proof in the appendix.

\noindent \underline{\textbf{{Comparing the complexities of Algorithms~\ref{algo:deepdlalgo2} and~\ref{algo:deepdlalgo}}}}:In Equation~\ref{eq:cmp3}, the term corresponding to $\ell=1$ is the same as Equation~\ref{eq:cmpfwd}. Therefore, in the absence of the maximum over layers in Equation~\ref{eq:cmp3}, Algorithms~\ref{algo:deepdlalgo2} and Algorithm~\ref{algo:deepdlalgo} would have the same complexity. Because of the maximum over layers, however, Algorithm~\ref{algo:deepdlalgo} has complexity bigger than Algorithm~\ref{algo:deepdlalgo2}, i.e., it requires more data. Intuitively, this is because assumptions B1--B7 puts fewer constraints on the deep generative model than assumptions A1--A7 do. Indeed, assumption A5b is a stronger constraint on the relationship between the number of hidden units in successive layers than assumption B2b which only requires that, for all $\ell=1,\ldots,L-1$, $r_\ell \geq r_{\ell+1}$. In other words, compared to Theorem~\ref{thm:mainres2}, Theorem~\ref{thm:mainres} applies to a larger set of deep generative models.

\noindent \underline{\textbf{Some limitations of Theorem~\ref{thm:mainres2} and Theorem~\ref{thm:mainres}}}: One limitation of the theorems is that some of the assumptions require the matrices from Equation~\ref{eq:deepdl} to have very large dimensions. In other words, for small (order $1000$) values of the dimensions, some of the assumptions cannot necessarily be met. In assumption A1, for instance, the larger $L$, the more stringent the conditions on RIP constants of the individual dictionaries that would guarantee that the product dictionaries satisfy the assumption. This also applies to assumption B1, in which the sparsity of $\y^{(\ell)}$ decreases with increasing $\ell$ and $L$, making it harder to satisfy.  Another limitation is assumption B4 on $s_{\Y^{(\ell)}}$, which cannot be satisfied easily for small dimensions.  Assumptions A7 and B7, which require the initial dictionaries to be close to the true ones are also limiting. However, close initialization is a standard assumption in convergence analyses of dictionary-learning algorithms. Finally, Theorem~\ref{thm:mainres2} assumes that the sparse dictionaries are random, which may be limiting. Similar to~\cite{parker2014bilinear1,parker2014bilinear2}, which takes a Bayesian approach to dictionary learning, we can think of this assumption as a prior on the sparse dictionaries.

\noindent \underline{\textbf{Conjecture on a lower bound on complexity}}: Our analysis of Algorithms~\ref{algo:deepdlalgo2} and~\ref{algo:deepdlalgo} gives upper bounds on the number of examples sufficient to learn the dictionaries from the deep sparse coding model. Here, we use a so-called ``coupon-collection'' argument to conjecture on a lower bound. The coupon-collector's problem~\cite{anceaume2015new} is a classical one that asks the question: given $r$ coupons, in expectation, how many coupons $n$ does one need to draw without replacement in order to have drawn each coupon at least once? In the context of the shallow sparse coding model (Equation~\ref{eq:shallowdl}, Equation~\ref{eq:deepdl} 
$L=1$), the $r$ columns of the dictionary play the role of the coupons and the number of draws $n$ gives a lower bound on the number
of examples required for learning: for successful dictionary learning, we need to observe each column of the dictionary at least ones. In~\cite{spielman2012exact}, the authors use a generalization of the coupon-collector's problem~\cite{anceaume2015new}, in which one selects $s$ coupons without replacement at every draw, to argue that a lower-bound for the number of examples required for dictionary learning is  $\mathcal{O}(\frac{r}{s} \text{log }r)$. The authors proceed to prove this lower bound for the case of a \emph{square} dictionary. To our knowledge, there is no similar analysis for the case of an overcomplete dictionary and noiseless observations. While there exist analyses for the noisy case based on information-theoretic arguments~\cite{jung2016minimax}, the accompanying results are not not meaningful in the noiseless case for which the SNR is infinite.
 
Using a coupon-collection argument, let us start with a lower bound for the backward-factorization algorithm (Algorithm~\ref{algo:deepdlalgo}). In the end, the lower bound ought to be independent of the sequence in which we perform the factorization and, in fact, we argue that it is. The lower bound for this algorithm is $ \mathcal{O}(\max_\ell  \frac{r_\ell}{s_{Y^{(\ell-1)}}} \text{log } r_\ell)$. Since $r_1 \geq r_2 \cdots \geq r_L$, and $s = s_{\Y^{(0)}} \leq  s_{\Y^{(1)}} \leq \cdots s_{\Y^{(L-1)}}$, the lower bound $\mathcal{O}(\frac{r_1}{s} \text{log }r_1)$. For the forward-factorization algorithm (Algorithm~\ref{algo:deepdlalgo2}), the first step of the algorithm, which yields the product matrix, dictates the sample complexity $n$. The coupon-collection argument yields a lower bound $\mathcal{O}(\frac{r_1}{s} \text{log }r_1)$.

It is not hard to see that the sequence in which one obtains the dictionary is not important for the lower bound. This is because, ultimately, any factorization must obtain, as part of its steps, the matrix of sparse codes $\X$. The coupon-collection argument suggests a lower bound of $\mathcal{O}(\frac{r_1}{s} \text{log }r_1)$ to obtain this matrix. The effect of the sequence in which one factors the production dictionary simply results in constraining the relationship between the number of hidden units at different layers.

\section{Concentration of eigenvalues of column-sparse random matrices with dependent sub-Gaussian entries}
\label{sec:sprand}

The proof of our main results, Theorems~\ref{thm:mainres2} and~\ref{thm:mainres}, rely on sparse matrices satisfying RIP~\cite{candes2008restricted}. In this section, we show that a class of random sparse matrices indeed satisfies RIP.

\subsection{Sparse random sub-Gaussian matrix model}

\noindent Let $\A \in \R^{d \times r}$ be a matrix with $r$ columns $(\avec_j)_{j=1}^r$. Let $\U \in \R^{d \times r}$ be a binary random matrix with columns $(\mathbf{u}_j)_{j=1}^r \in \{0,1\}^d$ that are i.i.d. $s_\A$-sparse binary random vectors each obtained by selecting $s_\A$ entries from $\mathbf{u}_i$ \emph{without} replacement, and letting $\U_{ij}$ be the indicator random variable of whether a given entry $j=1,\ldots,d$ was selected. Let $\mathbf{V} \in \R^{d \times r}$ be a random matrix with i.i.d. entries distributed according to a zero-mean sub-Gaussian random variable $V$ with variance $1$, $|V| \leq 1$ almost surely, and sub-Gaussian norm $\vectornorm{V}_{\psi_2}$--we adopt the notation $\vectornorm{\cdot}_{\psi_2}$ from~\cite{vershynin2010introduction} to denote the sub-Gaussian norm of a random variable. We consider the following generative model for the entries of $\A$:
\begin{equation}
	\A_{ij} = \sqrt{\frac{d}{s_\A}} \U_{ij} \vmat_{ij}, i=1,\ldots,d;\text{ }j=1,\ldots,r.
	\label{eq:sparseA}
\end{equation}

\noindent It is not hard to verify that the random matrix $\A$ thus obtained is such that $E[\avec_i \avec_i^\text{T}] = \eye$. To see this, we note the following properties of the generative model for $\A$
\begin{itemize}
	\item $\mathbb{P}[\U_{ij} = 1] = \frac{s_\A}{d}$.
	\item Let $j \neq j'$, $\mathbb{P}[\U_{ij} = 1 \cap \U_{ij'} = 1] = \frac{s_\A}{d} \cdot \frac{s_\A-1}{d-1}$.
	\item $\mathbb{E}[\A_{ij}^2] = \mathbb{E}\left[\left(\sqrt{\frac{d}{s_\A}} \U_{ij} \vmat_{ij}\right)^2\right] = \frac{d}{s_\A} \cdot \mathbb{P}[\U_{ij} = 1] \cdot \mathbb{E}[\vmat_{ij}^2] = 1$.
	\item Let $j \neq j'$, $\mathbb{E}[\A_{ij}\A_{ij'}] = \mathbb{E}[\U_{ij}\U_{ij'}\vmat_{ij}\vmat_{ij'}] = \mathbb{E}[\U_{ij}\U_{ij'}]\underbrace{\mathbb{E}[\vmat_{ij}\vmat_{ij'}]}_{0} = 0$.
	\item $\vectornorm{\avec_j}_2^2 \leq s_\A \cdot \frac{d}{s_\A} = d$ a.s, $j=1,\ldots,r$. 
\end{itemize}

\noindent Ultimately, we would like to understand the concentration behavior of the singular values of 1) $\A^\text{T}$, and 2) sub-matrices of $\A$ that consist of a sparse subset of columns (RIP-like results). We fist recall the following result from non-asymptotic random matrix theory~\cite{vershynin2010introduction}, and apply it obtain a concentration result on the singular values of the matrix $\A^\text{T}$.  

\begin{thm}[Restatement of Theorem 5.39 from~\cite{vershynin2010introduction} (Sub-Gaussian rows)]
	Let $\W \in \R^{r \times d}$ matrix whose rows $\{(\W^\text{T})_j\}_{j=1}^r$ ($(\W^\text{T})_j$ is the $j^\text{th}$ column of $\W^\text{T}$) are independent sub-Gaussian isotropic random vectors in $\R^d$. Then for every $t \geq 0$, with probability at least $1 - \text{exp}(-ct^2)$ one has
	\begin{equation}
		\sqrt{r} - C\sqrt{d} - t \leq \sigma_{\text{min}}(\W) \leq \sigma_{\text{max}}(\W) \leq \sqrt{r} + C\sqrt{d} + t.
	\end{equation}
\noindent Here, $C = C_K$, $c = c_K \geq 0$ depend only on the sub-Gaussian norm $K = \text{max}_j \vectornorm{(\W^\text{T})_j}_{\psi_2}$ of the rows.
	\label{thm:versh539}
\end{thm}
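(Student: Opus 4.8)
This is the classical sub-Gaussian-rows estimate, and the plan is to follow the standard $\varepsilon$-net plus Bernstein argument. First I would reduce control of the extreme singular values to control of the operator-norm deviation of the normalized Gram matrix from the identity: using the elementary fact that $\vectornorm{\tfrac{1}{r}\W^\text{T}\W - \eye}_2 \le \max(\delta,\delta^2)$ forces $\sqrt{r}\,(1-\delta) \le \sigma_{\text{min}}(\W) \le \sigma_{\text{max}}(\W) \le \sqrt{r}\,(1+\delta)$, so it suffices to show that $\mathbf{B} \coloneqq \tfrac{1}{r}\W^\text{T}\W - \eye$ has small spectral norm with high probability. I would then discretize the unit sphere $S^{d-1}$: fix a $1/4$-net $\mathcal{N}$ with $|\mathcal{N}| \le 9^d$ and invoke the standard bound $\vectornorm{\mathbf{B}}_2 \le 2\max_{\x \in \mathcal{N}} |\ip{\mathbf{B}\x}{\x}|$ valid for symmetric $\mathbf{B}$, reducing everything to a uniform bound on $|\ip{\mathbf{B}\x}{\x}|$ over the finite set $\mathcal{N}$.

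Next comes the per-point concentration. For fixed $\x \in S^{d-1}$, $\ip{\tfrac1r\W^\text{T}\W\,\x}{\x} = \tfrac1r\vectornorm{\W\x}_2^2 = \tfrac1r\sum_{j=1}^r \ip{(\W^\text{T})_j}{\x}^2$. Isotropy gives $\mathbb{E}\,\ip{(\W^\text{T})_j}{\x}^2 = 1$, and the sub-Gaussian hypothesis makes each $\ip{(\W^\text{T})_j}{\x}$ a sub-Gaussian scalar of norm at most $K = \max_j \vectornorm{(\W^\text{T})_j}_{\psi_2}$; hence the centered variables $Z_j \coloneqq \ip{(\W^\text{T})_j}{\x}^2 - 1$ are independent, mean-zero, and sub-exponential with $\vectornorm{Z_j}_{\psi_1}$ bounded by a constant times $K^2$. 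Applying the Bernstein-type inequality for sums of independent centered sub-exponential variables (Corollary 5.17 in~\cite{vershynin2010introduction}) gives $\mathbb{P}\big(|\tfrac1r\sum_{j=1}^r Z_j| \ge \varepsilon\big) \le 2\exp\big(-c'\min(\varepsilon^2,\varepsilon)\,r\big)$ with $c'$ depending only on $K$.

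Finally I would union-bound and optimize. Taking $\varepsilon = \max(\delta,\delta^2)$ and summing over $\mathcal{N}$, the probability that $\vectornorm{\mathbf{B}}_2 > 2\varepsilon$ is at most $9^d \cdot 2\exp(-c'\delta^2 r)$. Choosing $\delta = \delta(t) = C\sqrt{d/r} + t/\sqrt{r}$ with $C = C_K$ large enough that $c'\delta^2 r \ge (\log 9)\,d + c\,t^2$, this failure probability is at most $\exp(-c t^2)$ (after absorbing the factor $2$ into $c$). On the complementary event, the first step turns $\vectornorm{\mathbf{B}}_2 \le \max(\delta,\delta^2)$ into $\sqrt{r}\,(1-\delta) \le \sigma_{\text{min}}(\W)$ and $\sigma_{\text{max}}(\W) \le \sqrt{r}\,(1+\delta)$, and substituting the value of $\delta$ produces exactly $\sqrt{r} - C\sqrt{d} - t \le \sigma_{\text{min}}(\W) \le \sigma_{\text{max}}(\W) \le \sqrt{r} + C\sqrt{d} + t$, with $C = C_K$ and $c = c_K$ depending only on the sub-Gaussian norm of the rows.

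The step I expect to be the main obstacle is the sub-exponential Bernstein bound: one must verify that the square of a sub-Gaussian variable is genuinely sub-exponential with the claimed $\psi_1$ norm, and then track constants carefully through the $\varepsilon = \max(\delta,\delta^2)$ substitution so that $\min(\varepsilon^2,\varepsilon)$ collapses to $\delta^2$ and the tail acquires the clean Gaussian form $1 - \exp(-ct^2)$. The remaining ingredients --- the net cardinality estimate, the symmetric-matrix discretization inequality, and the Gram-matrix-to-singular-value reduction --- are routine.
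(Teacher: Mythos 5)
Your proposal is correct and is essentially the canonical argument: the paper does not prove this statement itself but cites it as Theorem 5.39 of~\cite{vershynin2010introduction}, and your net-plus-Bernstein route (Gram-matrix reduction via $\vectornorm{\tfrac{1}{r}\W^\text{T}\W-\eye}_2$, a $1/4$-net of cardinality $9^d$, sub-exponential concentration of $\ip{(\W^\text{T})_j}{\x}^2-1$, union bound, and the choice $\delta = C\sqrt{d/r}+t/\sqrt{r}$) is exactly the proof given there. The details you flag as delicate (squares of sub-Gaussians being sub-exponential with $\psi_1$ norm $\lesssim K^2$, and $\min(\varepsilon^2,\varepsilon)=\delta^2$ under $\varepsilon=\max(\delta,\delta^2)$) are handled in that reference precisely as you describe, so there is no gap.
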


Before we can apply the above result to $\W = \A^\text{T}$, we need to demonstrate that the columns of $\A$ are sub-Gaussian random vectors, defined as follows

\begin{defn}[Definition 5.22 from~\cite{vershynin2010introduction} (Sub-Gaussian random vectors)]
	We say that a random vector $\x$ in $\R^d$ is sub-Gaussian if the one-dimensional marginals $\ip{\x}{\z}$ are sub-Gaussian random variables for all $\z$ in $\R^d$. The sub-Gaussian norm of $\x$ is defined as
	\begin{equation}
		\vectornorm{\x}_{\psi_2} = \text{sup}_{\z \in \mathcal{S}^{d-1}} \vectornorm{\ip{\x}{\z}}_{\psi_2}.
	\end{equation}
	\label{def:subG}
\end{defn}


%

\begin{thm}[The columns of $\A$ are sub-Gaussian random vectors]
For every $j=1,\ldots,r$, $\avec_j$ is a sub-Gaussian random vector. Moreover,
	\begin{equation}
		\vectornorm{\avec_j}_{\psi_2}^2 \leq \frac{d}{s_\A}C_1 \vectornorm{V}_{\psi_2}^2,
	\end{equation}
where $C_1$ is a universal constant.
	\label{thm:aisubn}
\end{thm}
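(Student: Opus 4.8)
The plan is to verify Definition~\ref{def:subG} directly. Fix the column index $j$ (which I suppress from the notation) and fix an arbitrary direction $\z \in \mathcal{S}^{d-1}$; it suffices to show that the one-dimensional marginal
\begin{equation}
	\ip{\avec}{\z} = \sqrt{\frac{d}{s_\A}}\sum_{i=1}^d \U_i \vmat_i z_i
\end{equation}
is sub-Gaussian with $\vectornorm{\ip{\avec}{\z}}_{\psi_2}^2 \le \frac{d}{s_\A}C_1 \vectornorm{V}_{\psi_2}^2$ for a universal constant $C_1$ independent of $\z$, after which taking the supremum over $\z$ gives the theorem.

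First I would condition on the random support $S = \text{Supp}(\uvec) \subset \{1,\ldots,d\}$, which has $|S| = s_\A$ deterministically. Conditioned on $S$, the random variable $\ip{\avec}{\z}$ equals $\sqrt{d/s_\A}\sum_{i \in S}\vmat_i z_i$, a sum of \emph{independent}, mean-zero summands, since the $\vmat_i$ are i.i.d.\ copies of $V$ and are independent of $\U$. Each summand has $\vectornorm{\vmat_i z_i}_{\psi_2} = |z_i|\,\vectornorm{V}_{\psi_2}$ by homogeneity of the $\psi_2$-norm. Invoking the additivity of the squared sub-Gaussian norm for independent mean-zero sums (Lemma~5.9 of~\cite{vershynin2010introduction}), there is a universal constant $C$ such that the sub-Gaussian norm of $\ip{\avec}{\z}$ under the conditional law given $S$ is bounded by
\begin{equation}
	C\,\frac{d}{s_\A}\sum_{i \in S} z_i^2\,\vectornorm{V}_{\psi_2}^2 \;\le\; C\,\frac{d}{s_\A}\,\vectornorm{V}_{\psi_2}^2,
\end{equation}
where the final inequality uses $\sum_{i \in S}z_i^2 \le \sum_{i=1}^d z_i^2 = 1$. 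The key point is that this bound is the same for every realization of $S$.

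Because the conditional bound is uniform over $S$, it passes to the unconditional law by the tower property. Explicitly, I would record the previous step in MGF form: for every $\lambda \in \R$, $\mathbb{E}[\exp(\lambda\ip{\avec}{\z}) \mid S] \le \exp(C'\lambda^2 \frac{d}{s_\A}\vectornorm{V}_{\psi_2}^2)$ with the right-hand side deterministic, so taking $\mathbb{E}[\cdot]$ over $S$ yields the same bound for $\mathbb{E}[\exp(\lambda\ip{\avec}{\z})]$; by the standard equivalence between MGF bounds and the $\psi_2$-norm~\cite{vershynin2010introduction}, $\ip{\avec}{\z}$ is sub-Gaussian with $\vectornorm{\ip{\avec}{\z}}_{\psi_2}^2 \le \frac{d}{s_\A}C_1\vectornorm{V}_{\psi_2}^2$. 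Since $C_1$ does not depend on $\z$, Definition~\ref{def:subG} gives $\vectornorm{\avec}_{\psi_2}^2 \le \frac{d}{s_\A}C_1\vectornorm{V}_{\psi_2}^2$.

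I expect the only subtle point to be the de-conditioning. One cannot apply the independent-sum lemma to $\sum_{i=1}^d \U_i \vmat_i z_i$ as written, because the indicators $\U_i$ are negatively dependent (sampling without replacement). The resolution is precisely the support-independence of the conditional bound noted above, which comes from $\sum_{i \in S}z_i^2 \le 1$ for every $S$, and which lets the conditioning integrate out without cost. The remaining manipulations are routine bookkeeping with the equivalent characterizations of the sub-Gaussian norm from~\cite{vershynin2010introduction}.
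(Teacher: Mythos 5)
Your proposal is correct and follows essentially the same route as the paper: condition on the random support, apply Lemma~5.9 of~\cite{vershynin2010introduction} to the conditionally independent sum $\sum_{i\in S} z_i \vmat_{ij}$, note that the resulting bound is uniform over supports because $\sum_{i \in S} z_i^2 \le \vectornorm{\z}_2^2 = 1$, and then integrate out the conditioning. The only difference is bookkeeping in the de-conditioning step — you pass through the MGF characterization of the $\psi_2$-norm (valid here since the marginal is mean zero), whereas the paper works with the moment characterization and interchanges the expectation over the indicators with the supremum over $p$; your version of that step is, if anything, the cleaner one.
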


\noindent The proof of the theorem is in the appendix.

\subsection{Concentration properties of minimum and maximum singular values of $\A$}

\noindent We now have all of the requisites to apply Theorem~\ref{thm:versh539} to the matrix $\A$ defined in Equation~\ref{eq:sparseA}.

\begin{lem}
	Let $\A$ be the sparse random matrix obtained according to Equation~\ref{eq:sparseA}. There exist universal constants $c$ and $C$ such that with probability at least $1 - \text{exp}\left(-c t^2) \right)$
	\begin{equation}
		\sqrt{r}(1-\delta) \leq \sigma_{\text{min}}(\A^\text{T}) \leq \sigma_{\text{max}}(\A^\text{T}) \leq \sqrt{r}(1+\delta)
	\end{equation}
	and
	\begin{equation}
		\vectornorm{\frac{1}{r} \A\A^{\text{T}} - \eye}_2 \leq \text{max}(\delta,\delta^2), \delta = C\sqrt{\frac{d}{r}} + \frac{t}{\sqrt{r}}.
 	\end{equation}
	\label{lem:sparseconc}
\end{lem}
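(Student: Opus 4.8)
The plan is to derive Lemma~\ref{lem:sparseconc} as a corollary of Theorem~\ref{thm:versh539} (the restatement of Vershynin's Theorem~5.39), applied to the matrix $\W = \A^{\text{T}} \in \R^{r \times d}$, whose rows are exactly the columns $\avec_1,\ldots,\avec_r$ of $\A$. To invoke that theorem I must check that these row-vectors are (i)~independent, (ii)~isotropic, and (iii)~sub-Gaussian with a finite norm; everything after that is translation between the two standard ways of packaging such a result.

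First I would verify (i) and (ii). Independence is immediate from the generative model of Equation~\ref{eq:sparseA}: the indicator columns $\mathbf{u}_j$ of $\U$ and the columns $\vmat_{\cdot j}$ of $\vmat$ are drawn i.i.d.\ across $j$, so $\avec_1,\ldots,\avec_r$ are themselves i.i.d., hence independent. For isotropy, $\mathbb{E}[\avec_j \avec_j^{\text{T}}] = \eye$, the diagonal entries are handled by $\mathbb{E}[\A_{ij}^2] = \tfrac{d}{s_\A}\,\mathbb{E}[\U_{ij}]\,\mathbb{E}[\vmat_{ij}^2] = 1$ (the first line of the bulleted computation preceding the lemma), while for $i \neq i'$ one has $\mathbb{E}[\A_{ij}\A_{i'j}] = \tfrac{d}{s_\A}\,\mathbb{E}[\U_{ij}\U_{i'j}]\,\mathbb{E}[\vmat_{ij}]\,\mathbb{E}[\vmat_{i'j}] = 0$ because $\vmat_{ij},\vmat_{i'j}$ are independent and mean zero. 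For (iii) I would appeal directly to Theorem~\ref{thm:aisubn}, which says each $\avec_j$ is sub-Gaussian with $\vectornorm{\avec_j}_{\psi_2}^2 \leq \tfrac{d}{s_\A} C_1 \vectornorm{V}_{\psi_2}^2$; hence the quantity $K = \max_j \vectornorm{\avec_j}_{\psi_2}$ appearing in Theorem~\ref{thm:versh539} is finite and at most $\sqrt{C_1 d / s_\A}\,\vectornorm{V}_{\psi_2}$.

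Applying Theorem~\ref{thm:versh539} to $\W = \A^{\text{T}}$ (so the ambient dimension is $d$ and the number of rows is $r$) then gives, for every $t \geq 0$ and with probability at least $1 - \exp(-c_K t^2)$, that $\sqrt{r} - C_K \sqrt{d} - t \leq \sigma_{\text{min}}(\A^{\text{T}}) \leq \sigma_{\text{max}}(\A^{\text{T}}) \leq \sqrt{r} + C_K\sqrt{d} + t$. Dividing through by $\sqrt{r}$ and writing $\delta = C_K\sqrt{d/r} + t/\sqrt{r}$, $C := C_K$, $c := c_K$, this is precisely the first displayed inequality of the lemma. For the second, on the same event the $d \times d$ positive-semidefinite matrix $\A\A^{\text{T}}$ has eigenvalues $\sigma_i(\A^{\text{T}})^2 \in [\,r(1-\delta)^2,\, r(1+\delta)^2\,]$, so $\vectornorm{\tfrac1r \A\A^{\text{T}} - \eye}_2 = \max_i |\sigma_i(\A^{\text{T}})^2/r - 1| \leq \max\{(1+\delta)^2 - 1,\, 1 - (1-\delta)^2\} \leq 3\max(\delta,\delta^2)$; this is just the familiar equivalence between the two-sided singular-value bound already established and the operator-norm deviation of $\tfrac1r\A\A^{\text{T}}$ from $\eye$ (e.g.\ Vershynin, Lemma~5.36), and the harmless factor $3$ can be absorbed into $C$.

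Nothing here is conceptually hard; the one point to keep track of is the dependence of the constant $C = C_K$ on the row sub-Gaussian norm $K$. Because Theorem~\ref{thm:aisubn} only controls $K$ by $O(\sqrt{d/s_\A})$ rather than by an absolute constant, making $\delta$ small --- as the downstream RIP arguments will need --- forces $r$ to be large not just relative to $d$ but relative to $d/s_\A$, i.e.\ $s_\A$ must not be too small. This is consistent with the sparsity and scaling assumptions imposed elsewhere, and it is really the only subtlety; everything else is the bookkeeping sketched above.
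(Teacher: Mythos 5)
Your proposal is correct and follows essentially the same route as the paper: apply Theorem~\ref{thm:versh539} to $\W=\A^{\text{T}}$, using the independence and isotropy of the columns of $\A$ together with the sub-Gaussian norm bound of Theorem~\ref{thm:aisubn}, and then pass to the Gram-matrix deviation via Vershynin's Lemma~5.36. You simply spell out the verification steps (and the dependence of $C=C_K$ on $\sqrt{d/s_\A}$) that the paper leaves implicit, which is a fair and accurate elaboration.
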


\begin{proof}
	The first part of the Lemma follows from applying the Theorem~\ref{thm:versh539} to $\A^\text{T}$ from Equation~\ref{eq:sparseA}. The second part follows from Lemma 5.36 in~\cite{vershynin2010introduction} showing that the  equivalence of the first and second parts. 
\end{proof}


\subsection{Near-isometry properties of subsets of columns of $\A$}

Consider a subset $T_0 \subset \{1,2,\ldots,r\}$ s.t. $|T_0| = s$, and let $\A_{T_0}$ denote the $d \times s$ matrix corresponding to the subset of columns of $\A$ from Equation~\ref{eq:sparseA} indexed by $T_0$. We want an RIP-like result of $\A_{T_0}$, i.e., we seek a bound for the difference between the spectral norm of  $\A_{T_0}^{\text{T}}\A_{T_0}$ (appropriately normalized) and the identity matrix.
\begin{defn}[Restatement of Definition from~\cite{vershynin2010introduction} (Restricted isometries)]
A $d \times r$ matrix $\A$ satisfies the restricted isometry property of order $s \geq 1$ if there exists $\delta_s \geq 0$ such that the inequality
	\begin{equation}
		(1-\delta_s) \vectornorm{\x}_2^2 \leq \vectornorm{\A\x}_2^2 \leq (1+\delta_s) \vectornorm{\x}_2^2
	\end{equation}
	\label{def:ripdef}
holds for all $\x \in \R^r$ with $|\text{Supp}(\x)| \leq s$. The smallest number $\delta_s = \delta_s(\A)$ is called the restricted isometry constant of $\A$.
\end{defn}

We first recall a result from~\cite{vershynin2010introduction} which, along with the results from the previous section, will give us the desired bound. The result applies to one of two models for $\A$.

\noindent \underline{\textbf{Row-independent model}}: the rows of $\A$ are independent sub-Gaussian isotropic random vectors in $\R^r$.

\noindent \underline{\textbf{Column-independent model}}: The columns $\avec_j$ of $\A$ are independent sub-Gaussian isotropic random vectors in $\R^d$ with $\vectornorm{\avec_j}_2 = \sqrt{d}$ a.s.

\noindent The column-independent model is the one of interest here.

\begin{thm}[Restatement of Theorem 5.65 from~\cite{vershynin2010introduction} (Sub-Gaussian restricted isometries)]
	Let $\A \in \R^{d \times r}$ sub-Gaussian random matrix with independent rows or columns, which follows either of the two models above. Then the normalized matrix $\bar{\A} = \frac{1}{\sqrt{d}} \A$ satisfies the following for every sparsity level $1 \leq s \leq r$ and every number $\delta \in (0,1)$:
	\begin{equation}
		\text{if } d \geq C \delta^{-2} s \text{log} (e r/s), \text{ then } \delta_s(\bar{\A}) \leq \delta
	\end{equation}
\noindent with probability at least $1-2\text{exp}(-c\delta^2d)$. Here, $C = C_K$, $c = c_K \geq 0$ depend only on the sub-Gaussian norm $K = \text{max}_j \vectornorm{\avec_j}_{\psi_2}$ of the rows or columns of $\A$.
	\label{thm:versh565}
\end{thm}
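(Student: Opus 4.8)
This statement is a verbatim restatement of Theorem~5.65 of~\cite{vershynin2010introduction}, so in the paper the ``proof'' is essentially a pointer to that reference; for completeness I would reproduce the standard $\epsilon$-net-plus-union-bound argument, which relies only on the single-support concentration bound of Theorem~\ref{thm:versh539} together with elementary combinatorics. The first step is the reduction of RIP to a fixed-support statement: the condition $\delta_s(\bar{\A}) \le \delta$ is equivalent to $\vectornorm{\bar{\A}_T^{\text{T}}\bar{\A}_T - \eye}_2 \le \delta$ holding simultaneously for \emph{every} $T \subset \{1,\ldots,r\}$ with $|T| = s$, where $\bar{\A}_T = \frac{1}{\sqrt{d}}\A_T$ is the normalized $d \times s$ column-submatrix. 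So it suffices to control the extreme singular values of $\A_T$ for each fixed $T$ and then take a union bound over the $\binom{r}{s}$ choices of $T$.

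For a fixed $T$ in the row-independent model, the rows of $\A_T$ are the coordinate restrictions to $T$ of the rows of $\A$; such a restriction preserves isotropy (it extracts a principal submatrix of the covariance) and does not increase the sub-Gaussian norm, so $\A_T$ is a $d \times s$ matrix with independent sub-Gaussian isotropic rows, with $d \ge s$ in the regime of interest. Applying Theorem~\ref{thm:versh539} to $\A_T$ with $t = c_1 \delta \sqrt{d}$ for a small absolute constant $c_1$ yields, with probability at least $1 - \exp(-c_2\delta^2 d)$,
\[
\sqrt{d}\,\bigl(1 - C\sqrt{s/d} - c_1\delta\bigr) \le \sigma_{\text{min}}(\A_T) \le \sigma_{\text{max}}(\A_T) \le \sqrt{d}\,\bigl(1 + C\sqrt{s/d} + c_1\delta\bigr).
\]
Once $d \ge C'\delta^{-2}s$ the slack term $C\sqrt{s/d}$ is at most a constant multiple of $\delta$, so after squaring and dividing by $d$ we get $\vectornorm{\bar{\A}_T^{\text{T}}\bar{\A}_T - \eye}_2 \le \delta$ on this event, provided the absolute constants are chosen so that $\bigl(1 \pm (\text{const})\,\delta\bigr)^2 \subset [1-\delta,\,1+\delta]$.

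It remains to union bound and to treat the column-independent model. Since $\binom{r}{s} \le (er/s)^s$, the probability that the fixed-support event fails for \emph{some} $T$ is at most $(er/s)^s\exp(-c_2\delta^2 d) = \exp\bigl(s\log(er/s) - c_2\delta^2 d\bigr)$, which is at most $2\exp(-c\delta^2 d)$ as soon as $d \ge C\delta^{-2} s\log(er/s)$; this condition also subsumes $d \ge C'\delta^{-2}s$ since $\log(er/s) \ge 1$, and on the complementary event $\delta_s(\bar{\A}) \le \delta$, which is the claim. The only model-dependent step is the single-support bound: in the column-independent model the rows of $\A_T$ are no longer independent, but its $s$ columns are independent sub-Gaussian isotropic vectors in $\R^d$ of norm exactly $\sqrt{d}$, and the same bound $\vectornorm{\bar{\A}_T^{\text{T}}\bar{\A}_T - \eye}_2 \le \delta$ with failure probability $\exp(-c_2\delta^2 d)$ follows from the companion concentration estimate for matrices with independent sub-Gaussian columns in~\cite{vershynin2010introduction} (equivalently, a matrix-Bernstein bound for $\frac{1}{d}\sum_i \avec_{T,i}\avec_{T,i}^{\text{T}} - \eye$ with $\avec_{T,i}$ the columns of $\A_T$); the rest of the argument is then identical.

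The main obstacle is the balance between the last two steps: the per-support failure probability must decay exponentially in the \emph{ambient} dimension $d$ rather than merely in $s$, since only then does it overpower the combinatorial factor $(er/s)^s$ coming from the number of supports. Extracting an exponent of order $\delta^2 d$ from Theorem~\ref{thm:versh539} while keeping the additive slack $C\sqrt{s/d}$ below a fixed multiple of $\delta$ is exactly what pins down the sampling law $d \gtrsim \delta^{-2} s\log(er/s)$, and it is also why the column-independent case genuinely requires a separate (if standard) concentration tool rather than a direct appeal to Theorem~\ref{thm:versh539}.
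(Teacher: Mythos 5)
Your proposal is correct: the paper offers no proof of this statement --- it is a verbatim restatement of Theorem~5.65 of~\cite{vershynin2010introduction} and the ``proof'' is the citation --- and your fixed-support reduction, application of Theorem~\ref{thm:versh539} (respectively the companion independent-column concentration bound) at scale $t \sim \delta\sqrt{d}$, and union bound over the $\binom{r}{s} \le (er/s)^s$ supports is exactly the standard argument from that reference. Nothing further is needed.
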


We can apply this theorem to $\A$ from Equation~\ref{eq:sparseA} to conclude that, with a sufficient number of measurements $d$, $\bar{\A}$ satisfies the RIP of order $s$.

\begin{lem}
	Let $\A$ be the sparse random matrix obtained according to Equation~\ref{eq:sparseA}. Then the normalized matrix $\bar{\A} = \frac{1}{\sqrt{d}} \A$ satisfies the following for every sparsity level $1 \leq s \leq r$ and every number $\delta \in (0,1)$:
	\begin{equation}
		\text{if } d \geq C \delta^{-2} s \text{log} (e r/s), \text{ then } \delta_s(\bar{\A}) \leq \delta
	\end{equation}
\noindent with probability at least $1-2\text{exp}(-c\delta^2d)$, where $C = C_K$, $c = c_K \geq 0$ depend only on the bound from Equation~\ref{eq:aisubn3} on the sub-Gaussian norm $\vectornorm{\avec_j}_{\psi_2}$ of the columns of $\A$.
	\label{lem:sparsrip}
\end{lem}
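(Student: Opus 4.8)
The plan is to invoke Theorem~\ref{thm:versh565} directly, in its column-independent form, applied to the matrix $\A$ of Equation~\ref{eq:sparseA}; the Lemma then follows verbatim once all the hypotheses of that theorem's column-independent model are checked. So the work reduces to verifying four properties of the columns $\{\avec_j\}_{j=1}^r$ of $\A$: (i) they are independent; (ii) they are isotropic, i.e.\ $\mathbb{E}[\avec_j\avec_j^{\text{T}}] = \eye$; (iii) they are sub-Gaussian; and (iv) $\vectornorm{\avec_j}_2 = \sqrt{d}$ almost surely.

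For (i): the $j$-th column of $\A$ is a deterministic function of $(\uvec_j,\vmat_{\cdot j})$ alone, the $\uvec_j$ are i.i.d.\ across $j$ by construction, and the columns of $\vmat$ are independent with i.i.d.\ entries; hence $\avec_1,\ldots,\avec_r$ are i.i.d. For (ii), isotropy is exactly the computation already carried out in the bulleted list following Equation~\ref{eq:sparseA}: the diagonal entries of $\mathbb{E}[\avec_j\avec_j^{\text{T}}]$ equal $\mathbb{E}[\A_{ij}^2] = 1$, while for $i \neq i'$ the off-diagonal entry is $\mathbb{E}[\A_{ij}\A_{i'j}] = \frac{d}{s_\A}\mathbb{E}[\U_{ij}\U_{i'j}]\,\mathbb{E}[\vmat_{ij}]\,\mathbb{E}[\vmat_{i'j}] = 0$, using that the entries of $\vmat$ are mean-zero and independent of one another and of $\U$; thus $\mathbb{E}[\avec_j\avec_j^{\text{T}}] = \eye$. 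Property (iii) is precisely Theorem~\ref{thm:aisubn}, which furnishes both sub-Gaussianity and the bound $\vectornorm{\avec_j}_{\psi_2}^2 \leq \frac{d}{s_\A}C_1\vectornorm{V}_{\psi_2}^2$ recorded in Equation~\ref{eq:aisubn3}.

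The one point that requires a small argument is (iv). Although the bulleted list after Equation~\ref{eq:sparseA} only records $\vectornorm{\avec_j}_2^2 \leq d$, in fact equality holds: since $|V| \leq 1$ almost surely and $\mathbb{E}[V^2] = 1$, the nonnegative random variable $1 - V^2$ has zero mean, hence $V^2 = 1$ almost surely. Consequently $\vectornorm{\avec_j}_2^2 = \frac{d}{s_\A}\sum_{i \in \text{Supp}(\uvec_j)} \vmat_{ij}^2 = \frac{d}{s_\A}\cdot s_\A = d$ almost surely, because $\uvec_j$ has exactly $s_\A$ nonzero coordinates. With (i)--(iv) established, $\A$ meets the column-independent model of Theorem~\ref{thm:versh565}, whose conclusion is exactly the statement of the Lemma, the constants $C = C_K$, $c = c_K$ inheriting their dependence on $K = \max_j \vectornorm{\avec_j}_{\psi_2}$ through the bound of Theorem~\ref{thm:aisubn}.

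Since the entire argument is a verification of hypotheses followed by a citation, there is no serious obstacle; the only place to be careful is property (iv), where the apparent inequality $\vectornorm{\avec_j}_2 \le \sqrt{d}$ must be upgraded to the equality $\vectornorm{\avec_j}_2 = \sqrt{d}$ that Theorem~\ref{thm:versh565} demands, which is forced by the almost-sure boundedness of $V$ together with $\mathbb{E}[V^2]=1$.
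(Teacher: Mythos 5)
Your proposal is correct and follows essentially the same route as the paper: verify that $\A$ fits the column-independent model and that its columns are sub-Gaussian (Theorem~\ref{thm:aisubn}), then cite Theorem~\ref{thm:versh565}. You actually supply one detail the paper leaves implicit, namely upgrading $\vectornorm{\avec_j}_2^2 \leq d$ to the equality $\vectornorm{\avec_j}_2 = \sqrt{d}$ a.s.\ via the observation that $|V|\leq 1$ together with $\mathbb{E}[V^2]=1$ forces $V^2=1$ a.s., which is exactly what the column-independent model requires.
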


\begin{proof}
	By construction, $\A$ follows the column-independent model. In addition, we have proved in Theorem~\ref{thm:aisubn} that the columns of $\A$ are sub-Gaussian random vectors. The result follows from applying Theorem~\ref{thm:versh565} to $\A$ from Equation~\ref{eq:sparseA}.
\end{proof}

\noindent \underline{\textbf{Remark 4}}: Note that $\bar{\A}_{ij} = \frac{1}{\sqrt{d}} \frac{\sqrt{d}}{\sqrt{s_\A}} \U_{ij}\vmat_{ij} = \frac{1}{\sqrt{s_\A}} \U_{ij} \vmat_{ij} $.

\section{Simulations}
\label{sec:sims}

We use simulated data to demonstrate the ability of Algorithms~\ref{algo:deepdlalgo2} and~\ref{algo:deepdlalgo} to learn the deep generative model from Equation~\ref{eq:deepdl}, and to assess their sensitivity to the distance between the initial dictionaries and true ones. In addition, we compare the algorithms to an auto-encoder architecture designed specifically to learn the deep generative model. To be clear, the simulations are not meant to be exhaustive. Their goal is to demonstrate some aspects of the theoretical results.

\noindent \paragraph{Simulations.} As both algorithms are computationally demanding, in that they require the solutions to many convex optimization problems, we consider the case when $L=2$. We used the simulation studies from~\cite{altmin} to guide our choice of parameters
\begin{enumerate}
	\item We chose $\A^{(2)}$ to be of size $100 \times 200$, i.e., $d_2 = 100$ and $r_2 = 200$. We chose its entries to be i.i.d. $\mathcal{N}(0,\frac{1}{d_2})$.
	\item We chose $\A^{(1)}$ to be of size $200 \times 800$, i.e., $d_1 = 200$ and $r_1 = 800$. We chose its entries according to the sparse random matrix model from Equation~\ref{eq:sparseA}, letting $\{\vmat_{ij}\}_{i=1,j=1}^{200,800}$ be i.i.d. Rademacher random variables ($+/-1$ with equal probability). Each column of $\A^{(1)}$ has sparsity level $s_{(1)} = 3$.
	\item We chose $n = 6400$, so that $\X$ is of size $800 \times 6400$. Similar to~\cite{altmin}, we chose the non-zero entries of $\X$ to be i.i.d $\text{U}([-2,-1]\cup[1,2]$), and picked $s = 3$ (sparsity level of each column of $\X$).
	
	\item For a given matrix $\A$ and its estimate $\hat{\A}$, we use the following error metric to assess the success of the algorithm at learning $\A$
	\begin{equation}
	\text{err}(\hat{\A},\A) = \text{max}_i \sqrt{1 - \frac{\ip{\avec_i}{\hat{\avec}_i}^2}{\|{\avec}_i\|_2^2\|\hat{\avec}_i\|_2^2}}.
	\end{equation}
\end{enumerate}
\begin{figure}[H]
  \centering
  \includegraphics[scale=0.6]{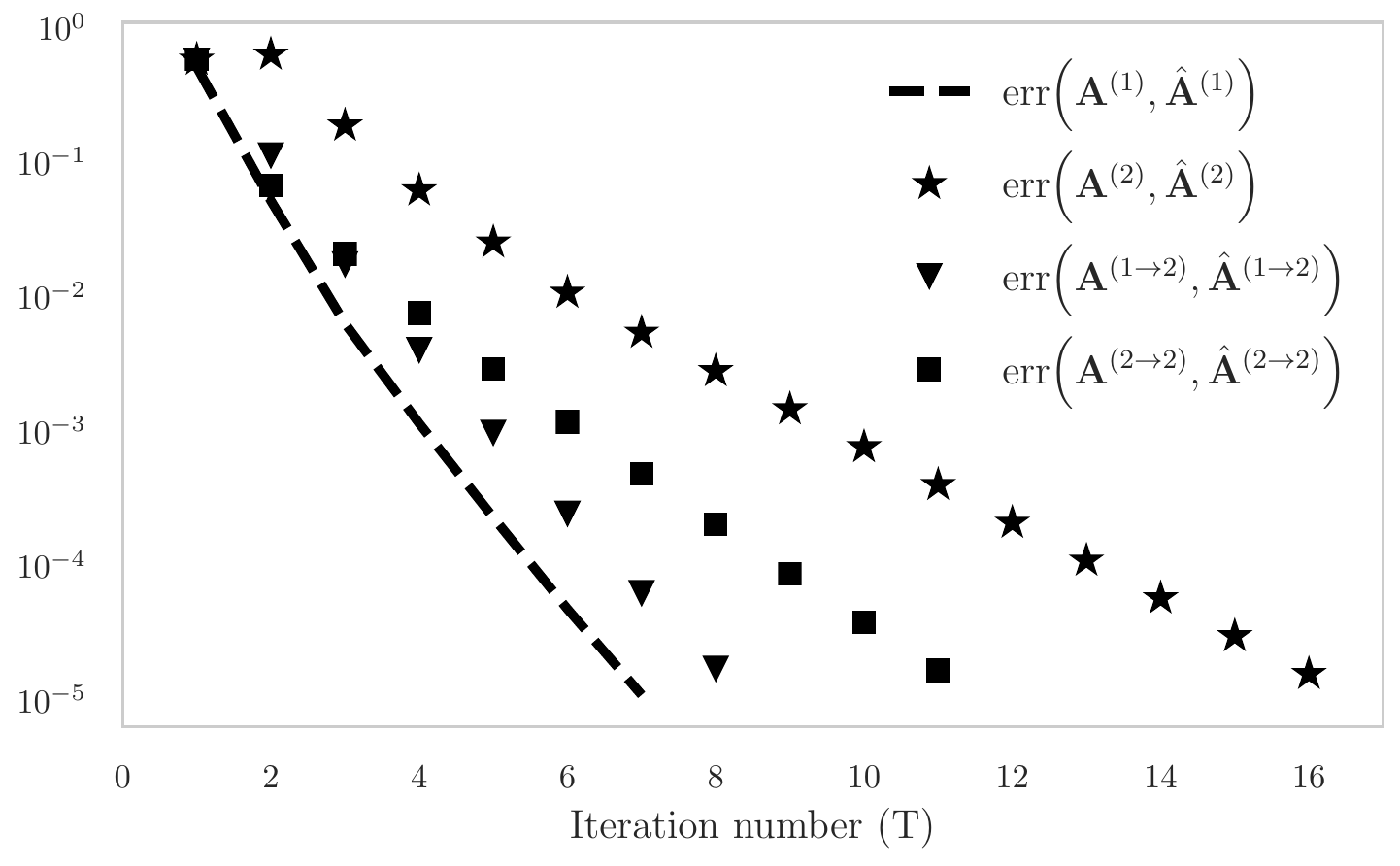}
\caption{Semi-log plot of the error between the true dictionaries and the recovered ones as a function of the number of iterations of Algorithm~\ref{algo:altminalgo}. The figure shows that the recovered dictionaries converge to the true ones linearly (exponentially). We refer the reader to the main text for additional interpretations of the simulations. }
\label{fig:recovery_errors}
\end{figure}

\noindent \underline{\textbf{N.B.}}: Note the fact that $n$ is much smaller than $r_1^2$ and $r_2^2$. Yet, the simulations will demonstrate that the dictionaries can be learned exactly. As in~\cite{altmin}, this highlights the fact that the second-order terms in the complexity bounds is an artifact of the proof techniques from~\cite{altmin}, which form the basis for our own proofs.

\noindent \paragraph{Initialization.} To initialize Algorithms~\ref{algo:deepdlalgo} and~\ref{algo:deepdlalgo2}, we need to initialize $\A^{(1)}$, $\A^{(2)}$ and $\A^{(2 \rightarrow 2)}$. For a generic matrix $\A$ of size $d \times r$,  following the simulation studies from~\cite{altmin}, we let $\A(0) = \A + \mathbf{Z}$, where entries of $\mathbf{Z}$ are i.i.d. $ \alpha \cdot\mathcal{N}(0,\frac{1}{d})$, $\alpha \in (0,1)$. We define the signal-to-noise ratio (SNR) of $\A(0)$ as follows
\begin{equation}
	\text{SNR} = 10 \text{ log}_{10} \left( \frac{\sigma^2_\mathbf{A}}{\sigma^2_\mathbf{Z}} \right) = -10 \text{ log}_{10} \alpha^2.
\end{equation}

\noindent \paragraph{Implementation.} We implemented Algorithms~\ref{algo:deepdlalgo2} and~\ref{algo:deepdlalgo} in the Python programming language. We used {\tt cvxpy}~\cite{cvxpy} with the {\tt MOSEK}~\cite{mosek} solver to find the solutions to the optimization problems from the inner loop of the alternating-minimization procedure (Algorithm~\ref{algo:altminalgo}). The authors in~\cite{altmin} use the GradeS~\cite{garg2009gradient} algorithm for this inner loop because it is faster. In our experience, cvxpy was much more stable numerically. The inner loop of Algorithm~\ref{algo:altminalgo} is embarrassingly parallelizable. Therefore, we also implemented a distributed version of our algorithm using the Python module {\tt dask}~\cite{dask}. The code is hosted on {\tt bitbucket}. The author is happy to make it available upon request\footnote{\url{https://bitbucket.org/demba/ds2p/src/master/}}. 

\noindent \paragraph{Results.} We assessed the ability of Algorithms~\ref{algo:deepdlalgo2} and~\ref{algo:deepdlalgo} to recover their respective dictionaries, as well as the sensitivity of the algorithms to the SNR of the initial dictionaries.

\noindent \underline{\textbf{Dictionary learning at moderate SNR}}: We applied the algorithms to data simulated as described previously. We generated the initial dictionaries so that their SNR equals $6$ dB, which corresponds to $\alpha = 0.5$. Figure~\ref{fig:recovery_errors} depicts the error between the true dictionaries and the ones recovered using Algorithm~\ref{algo:deepdlalgo2} and Algorithm~\ref{algo:deepdlalgo}. We obtained $\hat{\A}^{(1 \rightarrow 2)}$ and $\hat{\A}^{(2 \rightarrow 2)}$ using Algorithm~\ref{algo:deepdlalgo}, while $\hat{\A}^{(1)}$ and $\hat{\A}^{(2)}$ were obtained using Algorithm~\ref{algo:deepdlalgo}. 
The difference in the number of iterations comes from our choice of termination criterion at a value on the order of $10^{-5}$.

\begin{figure}[H]
  \centering
  \includegraphics[scale=0.6]{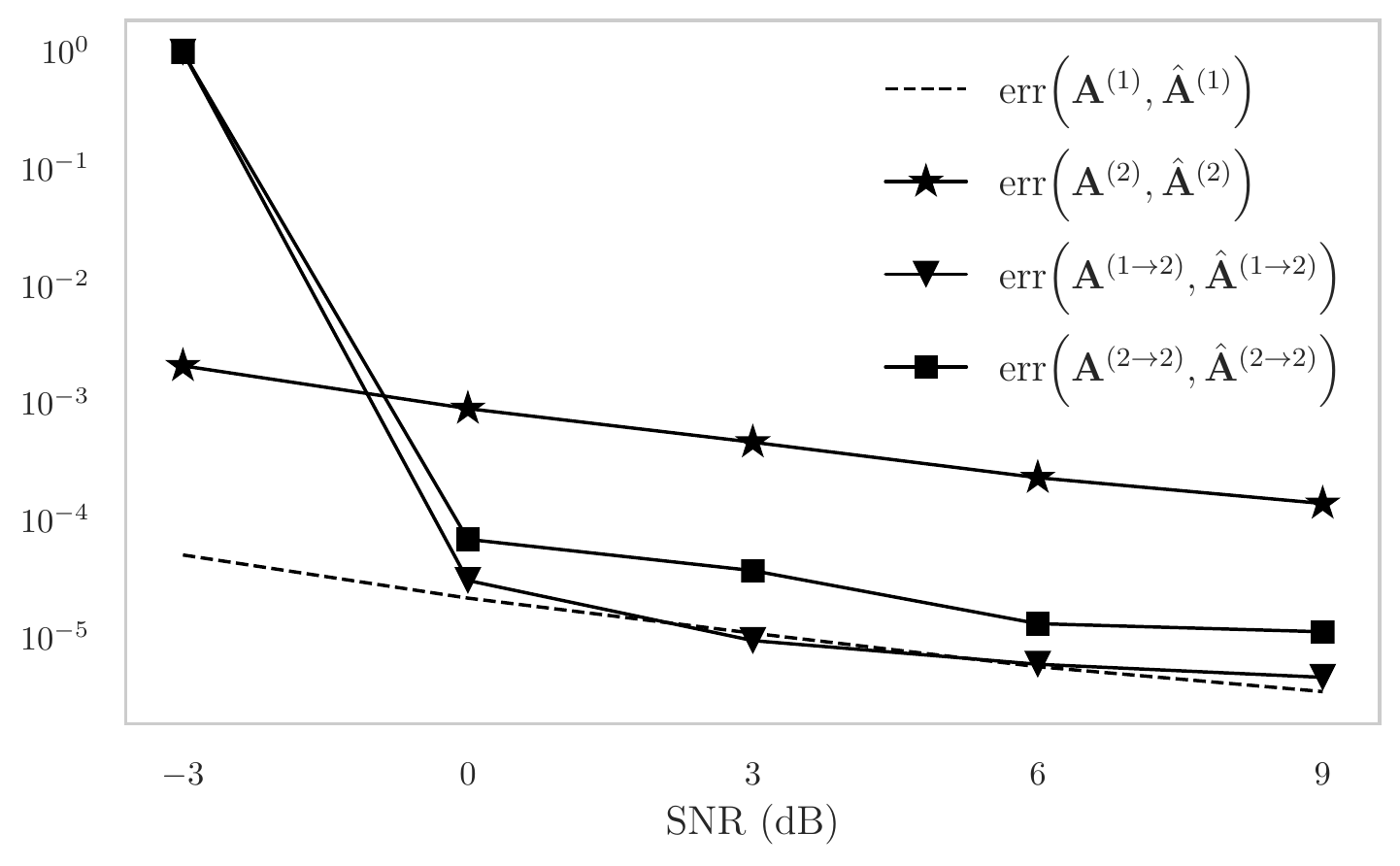}
\caption{Semi-log plot of the error between the true dictionaries and the recovered ones as a function of the SNR of the initial dictionaries. The figure shows that both algorithms are robust to initialization errors for moderate to high SNRs. We refer the reader to the main text for additional interpretations of the simulations. }
\label{fig:snr_sensitivity}
\end{figure}

The figure demonstrates the linear (exponential) convergence of the recovered dictionary to the true ones. The error curve for $\hat{\A}^{(2)}$ is consistent with the simulation studies from~\cite{altmin}. In this example ($L=2$), $\A^{(2 \rightarrow 2)}$ and $\A^{(2)}$ are the same matrix. The error rate appears to be faster for the former compared to the latter. We hypothesize that this is because the sparsity-level parameter for $\A^{(2 \rightarrow 2)}$ is $s_{(1)}=3$, while that for $\A^{(2)}$ is $s\cdot s_{(1)} = 9$. The amount of data required in Theorem 3.1 from~\cite{altmin} scales linearly with the sparsity level. The error rate for $\A^{(1)}$ is much faster than all of the others, i.e., it would appear that $\A^{(1)}$ is easier to learn. This should not be surprising as $\A^{(1)}$ is a matrix that is very sparse ($3$-sparse in fact). Therefore, its columns have far fewer that $d_1 = 100$ degrees of freedom. The proof techniques utilized in~\cite{altmin}, and which we rely upon, do not explicitly take into account the sparsity of $\A^{(1)}$. All that the theorem gives is an upper bound on the rate of convergence, without an accompanying lower bound. Studying the effect of sparsity of the matrix of interest in dictionary learning would be an interesting area of further inquiry. Overall, our simulations demonstrate the ability of Algorithms~\ref{algo:deepdlalgo2} and~\ref{algo:deepdlalgo} to recover the deep generative model from Equation~\ref{eq:deepdl}. We would like to stress that Figure~\ref{fig:recovery_errors} required on the order of $5$ hours to generate on a PC with $8$ GB of RAM and $4$ cores with $2$ threads each. 

\noindent \underline{\textbf{Sensitivity to SNR of initialization}}: We repeated the simulations for $5$ equally-spaced values of the SNR of the initial dictionaries, beginning at $-3$ dB and ending at $9$ dB. Figure~\ref{fig:snr_sensitivity} plots the error between the true dictionaries and the ones recovered as a function of SNR. We ran all algorithms for $T = 10$ iterations. The figure demonstrates that both Algorithm~\ref{algo:deepdlalgo2} and~\ref{algo:deepdlalgo} perform well at moderate to high SNRs. Consistent with the findings from Figure~\ref{fig:recovery_errors}, the forward-factorization algorithm converges faster overall and does better at high SNRs. The backward factorization algorithm appears to be more robust at low SNRs. To estimate $\A^{(2)}$, the backward-factorization algorithm uses $n = 6400$ examples, while the forward-factorization algorithm only uses $r_1 = 800$ examples. This would explain why the former performs better at low SNRs.

\noindent While, in practice, the simulations may or may not satisfy all assumptions from the theorems, they provide a useful demonstration of the algorithms.

\begin{figure}[H]
  \centering
  \includegraphics[scale=0.9]{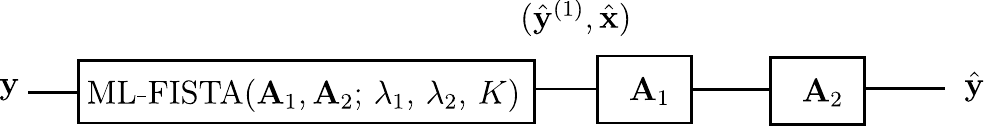}
\caption{Auto-encoder architecture for learning the dictionaries in the deep-sparse coding model with $L=2$. The encoder performs $K$ iterations of the ML-FISTA algorithm. The decoder reconstructs the observations by applying the dictionaries to the output of the encoder.}
\label{fig:ml-fista_ae}
\end{figure}

\noindent \underline{\textbf{Comparison with an auto-encoder with shared weights}}: Recent work~\cite{nguyen2019dynamics} suggests that a two-layer auto-encoder, whose encoder performs one iteration of ISTA~\cite{beck2009fast}, and whose decoder is linear and share weights with the encoder (Figure~\ref{fig:ista_ae}, $K = 1$), can learn the dictionary in the shallow sparse-coding model approximately. It is reasonable, then, to assume that a multi-layer generalization of the architecture from ought to be able to the dictionaries from the deep-sparse coding model approximately. Figure~\ref{fig:ml-fista_ae} proposes such an architecture. The encoder performs multi-layer FISTA~\cite{sulam2019multi} (ML-FISTA). ML-FISTA is a generalization of the multi-layer ISTA~\cite{sulam2019multi} (ML-ISTA) algorithm to solve
\begin{equation}
	\min\limits_{\x\in \R^{r_1}} \frac{1}{2} \vectornorm{\y-\A_2\A_1\x}_2^2 + \lambda_2 \overbrace{\vectornorm{\A_1\x}_1}^{\vectornorm{\y^{(1)}}_1} + \lambda_1 \vectornorm{\x}_1 .
	\label{eq:ml-lasso}
\end{equation}
\noindent ML-ISTA generalizes the ISTA algorithm for the shallow sparse coding to an algorithm for deep sparse coding. ML-FISTA is a fast version of ML-ISTA. Similar to the interpretation of ISTA as a recurrent architecture, we can	 interpret ML-FISTA as a recurrent network with weights $\A_1$ and $\A_2$ and soft-thresholding (two-sided ReLU nonlinearities). The encoder of Figure~\ref{fig:ml-fista_ae} unfolds $K$ iterations of ML-FISTA. The decoder is linear and shares weights with the encoder. It first applies $\A_1$ and then $\A_2$ to the output of the decoder. We train the auto-encoder by mini-batch stochastic gradient descent and the ADAM optimizer~\cite{Kingma2014AdamAM} with a learning rate of $0.1$. We used $K=1000$, $\lambda_1 = $ and $\lambda_2 = 0.01$. We performed a grid search and found these values to be the ones that produce, respectively, sparse codes $\hat{\x}$ and $\hat{\y}^{(1)}$ that approximate the ground-truth codes from the simulations the best. We trained for $150$ epochs and used a batch size of $40$ examples. Similar to ISTA (Figure~\ref{fig:ista_ae}), ML-FISTA uses parameters $M_1$ and $M_2$ for its ReLUs, both of which we set equal to $10$. Figure~\ref{fig:ml_fista_errors} shows the result of training the ML-FISTA auto-encoder. The SNR of the initial dictionary is $6$ dB. The figure shows that training the architecture decreases the error from its initial value of $\approx 0.5$ to values of $\approx 0.2$ and $\approx 0.1$ for $\A_1$ and $\A_2$ respectively. These values are orders of magnitude higher than those achieved by Algorithms~\ref{algo:deepdlalgo2} and~\ref{algo:deepdlalgo}. These results are consistent with the experiments from~\cite{nguyen2019dynamics}. To our knowledge, there is no analysis in the literature of auto-encoder architectures for learning dictionaries from a deep sparse coding model, nor are there experiments assessing the ability of various architectures to learn the dictionaries from such models. We hypothesize that the reason why the errors plateau as a function of epoch is because the values of $\lambda_1$ and $\lambda_2$ are constant. These two parameters play the role of the $\epsilon_t$ sequences in Algorithms~\ref{algo:deepdlalgo2} and~\ref{algo:deepdlalgo} and, therefore, ought to decrease, both as a function of ML-FISTA unfoldings~\cite{chen2018theoretical}, and as a function of epoch~\cite{altmin}. Unfortunately, a theoretical characterization of how to do this is absent from the literature at this time.

\begin{figure}[H]
  \centering
  \includegraphics[scale=0.6]{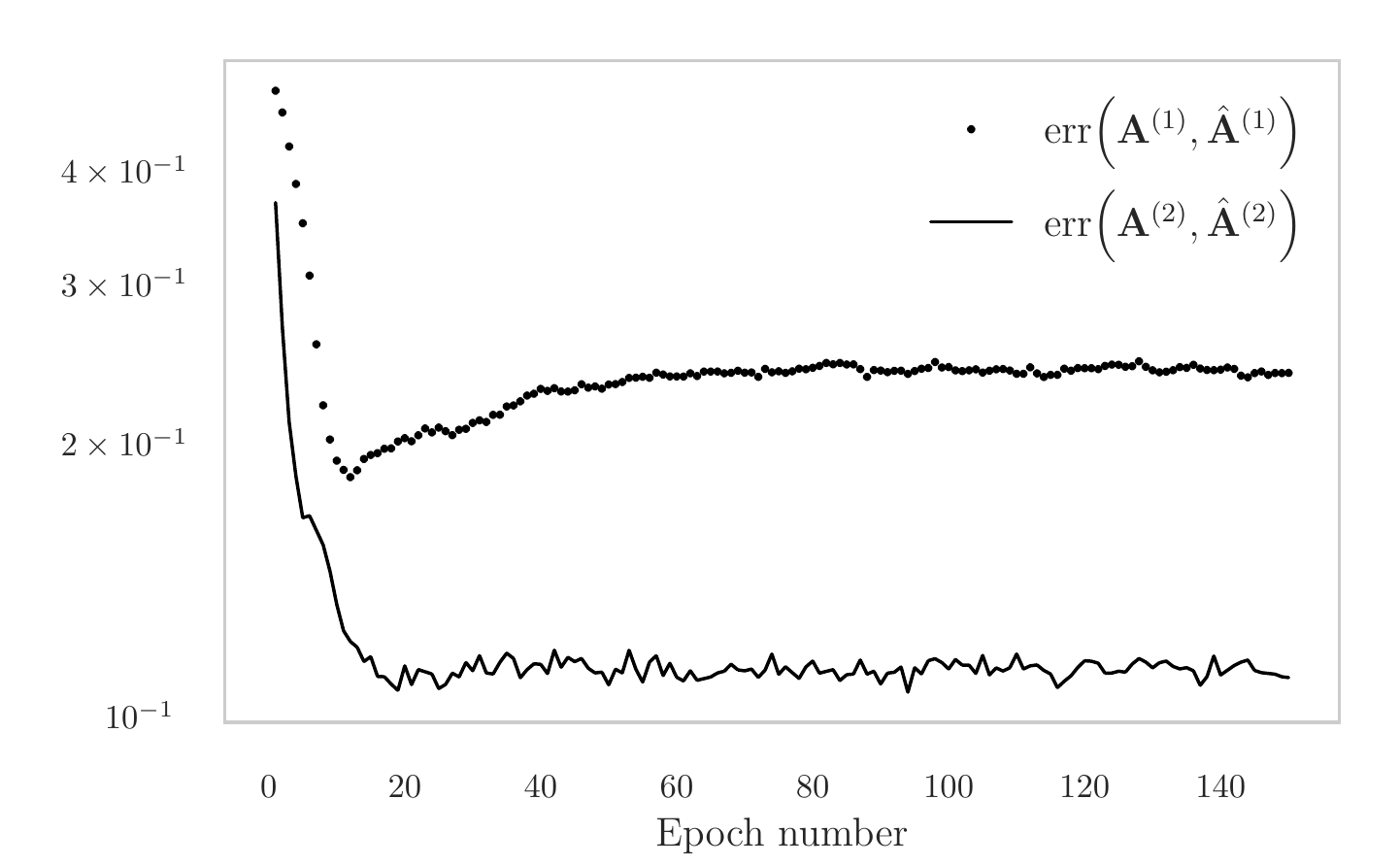}
\caption{Semi-log plot of the error between the true dictionaries and the  ones recovered by the ML-FISTA auto-encoder as a function of epoch.}
\label{fig:ml_fista_errors}
\end{figure}

%
%
%
%
%
%
%
%

\section{Discussion}
\label{sec:disco}

We have provided insights as to the complexity of learning deep ReLU neural networks by building a link between deep recurrent auto-encoders~\cite{rolfe2013discriminative} and classical dictionary learning theory~\cite{altmin}. We used this link to develop a deep version of the classical sparse coding model from dictionary learning. Starting with a sparse code, a cascade of linear transformations are applied in succession to generate an observation. Each transformation in the cascade is constrained to have sparse columns, except for the last. We developed two sequential alternating-minimization algorithms to learn this deep generative model from observations and proved that, under assumptions stated in detail, the algorithms are able to learn the underlying transformations in the cascade that generated the observations. The computational complexity of the algorithms is a function of the dimension of the inputs of each of the transformations in the cascade (number of hidden units), the sparsity level of these inputs (number of active neurons), and the sparsity of the columns of the sparse matrices (sparsity of weight matrix in sparse layers).

In particular, the computational complexity of the forward-factorization is $\mathcal{O} \left(\text{ max}(r_1^2,r_1 s_{\Y^{(0)}} )\right)$, i.e., a function of the number of hidden units in the deepest layer and the number active neurons in that layer. In addition, the forward-factorization algorithm requires $r_{\ell} = \mathcal{O} \left( \text{ max}(r_{\ell+1}^2,r_{\ell+1} s_{(\ell)}) \right)$, $\ell=1,\ldots,L-1$, i.e., it relates the number of hidden units at a given layer to that at the preceding layer, giving a practical prescription for designing deep ReLU architectures. The complexity of the backward-factorization algorithm is $\mathcal{O} \left(\text{max}_{\ell}  \text{ max}(r_1^2,r_\ell s_{\mathbf{Y}^{(\ell-1)}})\right) $: it requires more data because it puts less stringent constraints on the architecture. The proofs rely on a certain family of sparse matrices satisfying the RIP. We studied the properties of random versions of this family of matrices using results from non-asymptotic random matrix theory. We used a coupon-collection~\cite{anceaume2015new} argument to conjecture a lower bound on sample complexity. This argument posits that we $\mathcal{O}(\frac{r_1}{s} \text{log }r_1)$ for learning all of the matrices in the deep sparse coding model, where $r_1$ is the number of hidden units at the deepest layer, and $s$ the number of active neurons. This lower bound may explain why deep neural networks require more data to train. In the deep sparse coding model, each layer gives a sparse representation of the input. The deeper the layer, the sparser, and  hence the more 	specific, the representation. Consider a scenario in which  we wish to use the  sparse representations at different depths for a classification task. The deeper the representation, the sparser it is, and hence the easier the classification task, as it would require fewer features. 
We require $\mathcal{O}(\frac{r_\ell}{s_{\Y{(\ell-1)}} }\text{log }r_\ell)$ examples to learn the all the weights up to layer $\ell$ and, hence, the sparse representations at layer $\ell$. The deeper we go, the sparser the representation and, therefore, the more data we need. Whether a deep network outperforms a shallower one may depend on whether it can result in a sparser, more specific, representation that is useful for a classification task.

We used simulations to validate the algorithms we propose for learning the dictionaries in the deep sparse coding model. The simulations show that the proposed algorithms can successfully learn a sequence of dictionaries and that they are robust to initialization at a wide range of SNRs. We also showed that the algorithms are superior to an auto-encoder architecture motivated by the multi-layer FISTA algorithm. 


\section*{Acknowledgments}
I would like to thank Ms. Bahareh Tolooshams for her assistance with the neural-network experiments. I would also like to thank the anonymous referees for their feedback.

\section{Appendix}

\subsection{Derivation of soft-thresholding operator}

The solution $\hat{x}$ must satisfy
\begin{equation}
	y = x + \lambda \partial |x|.
	\label{eq:softopt}
\end{equation}

\noindent \underline{\textbf{Case $y > \lambda$}}: The idea is to show that, if $y > \lambda$, then necessarily $x > 0$. Suppose for a contradiction that $x < 0$, then Equation~\ref{eq:softopt} yields $y = x - \lambda < 0$, a contradiction. Similarly, if $x = 0$, then $y = \lambda \partial x$, where the sub-gradient $\partial x \in [-1,1]$, a contradiction since $y > \lambda$. Therefore, if $y > \lambda$, Equation~\ref{eq:softopt} yields $\hat{x} = y - \lambda$.

\noindent \underline{\textbf{Case $y < \lambda$}}: This case proceeds similarly as above.

\noindent \underline{\textbf{Case $|y| \leq \lambda$}}: Suppose, for contradiction, that $x > 0$, then Equation~\ref{eq:softopt} yields $y > \lambda$. Similarly, if $x < 0$, we obtain $y < -\lambda$. Therefore, if $|y| \leq \lambda$, $\hat{x} = 0$.

\noindent Together, the three cases above yield the soft-thresholding function of Equation~\ref{eq:softop} as the solution to Equation~\ref{eq:uncsoft}.

\subsection{Proof of Theorem~\ref{thm:prodrip}}

We proceed by induction on $\bar{\ell}$.\\

\begin{proof}
\noindent \underline{Base case}: $\bar{\ell} = L$. The theorem is true for this case by assumption A1.
\\
\\
\noindent \underline{Induction}: Suppose the theorem is true for $\bar{\ell}$, we will show that it holds true for $\bar{\ell}-1$. Let $\y^{(\bar{\ell}-2)}$ be a $2s_{\mathbf{Y}^{(\bar{\ell}-2)}}$-sparse vector
\begin{equation}
	\vectornorm{\A^{(\bar{\ell}-1 \rightarrow L)}\y^{(\bar{\ell}-2)}}_2^2 = \vectornorm{\A^{(\bar{\ell} \rightarrow L)}\A^{(\bar{\ell}-1)}\y^{(\bar{\ell}-2)}}_2^2.
\end{equation}

$\A^{(\bar{\ell}-1)}\y^{(\bar{\ell}-2)}$ is a $2s_{\mathbf{Y}^{(\bar{\ell}-1)}}$-sparse vector, allowing us to apply our inductive hypothesis
\begin{eqnarray}
	& & \prod_{\ell=\bar{\ell}}^L (1-\delta_{2s_{\mathbf{Y}^{(\ell-1)}}}) \vectornorm{\A^{(\bar{\ell}-1)}\y^{(\bar{\ell}-2)}}_2^2 \leq \nonumber \\
    & & \vectornorm{\A^{(\bar{\ell} \rightarrow L)}\A^{(\bar{\ell}-1)}\y^{(\bar{\ell}-2)}}_2^2 \leq \nonumber \\
    & & \prod_{\ell=\bar{\ell}}^L (1 + \delta_{2s_{\mathbf{Y}^{(\ell-1)}}}) \vectornorm{\A^{(\bar{\ell}-1)}\y^{(\bar{\ell}-2)}}_2^2.
\end{eqnarray}

The result follows by assumption A1 since $\A^{(\bar{\ell}-1)}$ satisfies the RIP of order $2s_{\mathbf{Y}^{(\bar{\ell}-2)}}$.

\end{proof}

A direct consequence of the theorem is that $\forall \bar{\ell} = 1,\ldots,L$, the RIP constant of $\A^{(\bar{\ell} \rightarrow L)}$ must be smaller than or equal to $\text{max}\left(1-\prod_{\ell=\bar{\ell}}^L (1-\delta_{2s_{\mathbf{Y}^{(\ell-1)}}}),\prod_{\ell=\bar{\ell}}^L (1+\delta_{2s_{\mathbf{Y}^{(\ell-1)}}})-1\right)$.

\subsection{Proof of Theorem~\ref{thm:mainres2}}

\begin{proof}
We proceed by induction on $\bar{\ell}$.\\
\noindent \underline{Base case}: $\bar{\ell} = 1$. In this case, $\mathbf{Y} = \A^{(1 \rightarrow L)} \X$. Theorem 3.1 from~\cite{altmin} guarantees that $\hat{\A}^{(1 \rightarrow L)}$, the limit as $T \to \infty$ of $\A^{(1 \rightarrow L)}(T)$ converges to $\A^{(1 \rightarrow L)}$ with probability at least $1-2\delta_{(1 \rightarrow L)}$. Therefore, $\mathbb{P}[E_{1 \rightarrow L}] \geq 1-2\delta_{1 \rightarrow L}$, proving the base case.
\\
\\
\noindent \underline{Induction}: Suppose the Theorem is true for $\bar{\ell}$, we will show that is true for $\bar{\ell} + 1$.

\noindent Conditioned on the event $\cap_{\ell=1}^{\bar{\ell}} E_{\ell \rightarrow L}$, 
\begin{equation}
 \hat{\A}^{(\bar{\ell} \rightarrow L)} = \A^{(\bar{\ell} \rightarrow L)} = \A^{(\bar{\ell}+1 \rightarrow L)}\A^{(\bar{\ell})}.
\end{equation}
\noindent The nonzero entries of $\A^{(\bar{\ell})}$ do not have variance $1$, which is one of the assumptions from Theorem 3.1 of~\cite{altmin}. That's why, we scale $\hat{\A}^{(\bar{\ell} \rightarrow L)}$ in line 5 of Algorithm~\ref{algo:deepdlalgo2} by $\sqrt{s_{\bar{\ell}}}$
\begin{eqnarray}
\sqrt{s_{\bar{\ell}}}\hat{\A}^{(\bar{\ell} \rightarrow L)} & = & \sqrt{s_{\bar{\ell}}}\A^{(\bar{\ell} \rightarrow L)} \\
 & = & \sqrt{s_{\bar{\ell}}}\A^{(\bar{\ell}+1 \rightarrow L)}\A^{(\bar{\ell})} \\ 
 & = & \A^{(\bar{\ell}+1 \rightarrow L)} \left(\sqrt{s_{\bar{\ell}}}\A^{(\bar{\ell})} \right).
\end{eqnarray}

The nonzero entries of $\sqrt{s_{\bar{\ell}}}\A^{(\bar{\ell})}$ have variance $1$. Therefore, applying Theorem 3.1 from~\cite{altmin}, the limit $\hat{\A}^{(\bar{\ell}+1 \rightarrow L)}$ as $T \to \infty$ of $\A^{(\bar{\ell}+1 \rightarrow L)}(T)$ converges to $\A^{(\bar{\ell} + 1 \rightarrow L)}$ with probability at least $1-2\delta_{(\bar{\ell}+1 \rightarrow L)}$. Hence, $\mathbb{P}[\cap_{\ell=1}^{\bar{\ell}+1} E_{\ell \rightarrow L}] = \mathbb{P}[E_{\bar{\ell}+1 \rightarrow L} | \cap_{\ell=1}^{\bar{\ell}} E_{\ell \rightarrow L} ] \mathbb{P}[\cap_{\ell=1}^{\bar{\ell}} E_{\ell \rightarrow L}] = (1-2\delta_{(\bar{\ell}+1 \rightarrow L)}) \prod_{\ell=1}^{\bar{\ell}} (1-2\delta_{(\ell \rightarrow L)}) = \prod_{\ell=1}^{\bar{\ell}+1} (1-2\delta_{(\ell \rightarrow L)})$.

\end{proof}

\noindent This completes the proof.

\subsection{Sketch of proof of Theorem~\ref{thm:mainres}}

We will prove the result by induction on $\bar{\ell}$. Compared to Theorem~\ref{thm:mainres2}, the proof of Theorem~\ref{thm:mainres} is less immediate. Indeed, the the proof of Theorem~\ref{thm:mainres2} follows from application of Theorem 3.1 from~\cite{altmin} and induction, assuming the product matrices satisfy RIP (Section~\ref{sec:sprand}). The proof of Theorem~\ref{thm:mainres} requires us to fill in some steps for proofs of the main results from~\cite{altmin} that lead to Theorem 3.1. We sketch how to fill in these steps below. Before proceeding with the proof, let us discuss in detail the case when $L=2$ in Equation~\ref{eq:deepdl} and $\bar{\ell}=1$. We focus on exact recovery of $\A^{(2)}$ and $\A^{(1)}$ and defer computation of the probability in Equation~\ref{eq:exactdictreco} to the proof that will follow.

\noindent \underline{\textbf{{Intuition behind the proof:}}} Algorithm~\ref{algo:deepdlalgo} begins by solving for $\left(\hat{\A}^{(2)}, \hat{\mathbf{Y}}^{(1)}\right)$. If we can show that the algorithm succeeds for this pair, in particular that $\hat{\A}^{(2)} = \A^{(2)}$, then it follows that $\hat{\A}^{(1)} = \A^{(1)}$ in the following iteration of the algorithm. This is because, if the first iteration were to succeed, then  $\mathbf{Y}^{(1)} = \A^{(1)} \mathbf{X}$, which is the very model of Equation~\ref{eq:shallowdl}, which was treated in detail~\cite{altmin}. 

Focusing on $\mathbf{Y} = \A^{(2)} \mathbf{Y}^{(1)}$, the key point we would like to explain is that, in Equation~\ref{eq:shallowdl}, the properties of $\mathbf{X}$ that allow the authors from~\cite{altmin} to prove their main result also apply to $ \sigma_{(0 \rightarrow 1)} \mathbf{Y} = \sqrt{s_{(1)}} \A^{(2)} \mathbf{Y}^{(1)} = \A^{(2)} (\sqrt{s_{(1)}}\mathbf{Y}^{(1)}) $. This is not directly obvious because  $ \sqrt{s_{(1)}} \mathbf{Y}^{(1)}$ is the product of a sparse matrix and the matrix $\mathbf{X}$ of codes.

We begin with a few remarks on the properties of  $ \sqrt{s_{(1)}} \mathbf{Y}^{(1)}$.

\noindent \underline{\textbf{Deriving a version of Lemma 3.2 from~\cite{altmin}}}: Lemma 3.2 relies on Lemmas A.1 and A.2, which give bounds for the matrix of indicator values for the nonzero entries of $\X$. For $\sqrt{s_{(1)}}\Y^{(1)} = \sqrt{s_{(1)}}\A^{(1)} \X$, we can replace, in the proof of Lemma 3.2, the matrix of indicators of its nonzero entries with the product of the matrices of indicators of the nonzero entries for $\A^{(1)}$ and $\X$ respectively. Using assumption B2c, this yields a bound that now depends on the $n$, $r_2$ and the sparsity level $s \cdot s_{(1)}$.

\noindent \underline{\textbf{Bounded singular values}}: Lemmas A.1 and A.2 from~\cite{altmin} give bounds on the largest and smallest singular values of $\X^{\text{T}}$. Together with assumptions B2a and B2b, this yields bounds for the largest and the smallest singular values of $ \sqrt{s_{(1)}} \mathbf{Y}^{(1)\text{T}}$

\begin{eqnarray}
	\sigma_{\text{min}}(\sqrt{s_{(1)}} \mathbf{Y}^{(1)\text{T}}) & \geq & \tilde{\mu}_{(1)} \sqrt{\frac{n s_{(1)}s}{4 r_2}}\\
	\vectornorm{\sqrt{s_{(1)}} \mathbf{Y}^{(1)\text{T}}}_2 & \leq & \mu_{(1)} 2 \sqrt{\frac{n (s_{(1)}s)^2}{r_2}},
\end{eqnarray}
\noindent where both $\tilde{\mu}_{(1)}$ and $\mu_{(1)}$ are $\mathcal{O}(1)$~\cite{vershynin2010introduction}. A version of Lemmas A.3 and A.4 for $ \sqrt{s_{(1)}} \mathbf{Y}^{(1)}$ directly follows.

\noindent \underline{\textbf{Uniformly bounded variance}}: Since the columns of $\mathbf{X}$ are i.i.d., so are those of $ \mathbf{Y}^{(1)}$. Moreover, since both the entries of $\A^{(1)}$ and $\mathbf{X}$ are bounded by assumptions, so are those of $\mathbf{Y}^{(1)}$. Let us compute the variance of $ \sqrt{s_{(1)}} \mathbf{Y}_{ij}^{(1)}$. First note that
\begin{equation}
  \sqrt{s_{(1)}} \mathbf{Y}_{ij}^{(1)} = \sqrt{s_{(1)}}\mathbf{e}_i^{\text{T}} \mathbf{y}_j^{(1)} = \sqrt{s_{(1)}}\mathbf{e}_i^{\text{T}} \A^{(1)} \x_j,
\end{equation}

\noindent  where $\mathbf{e}_i$ is the $i^{\text{th}}$ element of the standard basis in $\in \R^{r_2}$. Since the nonzero entries of $\X_j$ have mean zero by assumption B3, the entries of $\X_j$ are uncorrelated. This is because the expected value of the product of such entries is the expected value of the product of dependent Bernoulli random variables and independent \emph{mean zero} random variables. Therefore, the variance of  $ \sqrt{s_{(1)}} \mathbf{Y}_{ij}^{(1)}$ is
\begin{equation}
	s_{(1)} \mathbb{E}[(\mathbf{e}_i^{\text{T}} \A^{(1)} \x_j)^2] =  \vectornorm{\A^{(1)\text{T}}\mathbf{e}_i}_2^2 \frac{s_{(1)}s}{r_1},
\end{equation}
\noindent where we have used the fact that, since the nonzero entries of $\X_j$ have mean zero, they are uncorrelated, and we have substituted the variance $\frac{s}{r_1}$ of the entries of $\X_j$. Using assumptions B2a and B2b gives uniform bounds from above and below, respectively, on the variance of $ \sqrt{s_{(1)}} \mathbf{Y}_{ij}^{(1)}$
\begin{equation}
	\tilde{\mu}_{(1)}^2 \frac{s_{(1)}s}{r_2} \leq \mathbb{E}[(\sqrt{s_{(1)}} \mathbf{Y}_{ij}^{(1)})^2] \leq \mu_{(1)}^2 \frac{s_{(1)}s}{r_2},
\end{equation}

\noindent A direct consequence of this are versions of Lemma A.5 and Lemma A.6 that apply to $ \sqrt{s_{(1)}} \mathbf{Y}^{(1)}$.
\\
\\
\noindent The above discussion gives all the necessary ingredients for a version of Lemma 3.3, the center piece of~\cite{altmin}, applied to $\sqrt{s_{(1)}}\Y^{(1)}$. 
We can now to apply Theorem 3.1 from~\cite{altmin} to $\sqrt{s_{(1)}}\mathbf{Y} = \A^{(2)} (\sqrt{s_{(1)}} \mathbf{Y}^{(1)})$, guaranteeing recovery of $\A^{(2)}$.

The interested reader can verify all of the above. A detailed technical exposition of these points would lead to a tedious and unnecessary digression, without adding much intuition. Using induction, it can be shown that the above remarks apply to $\sigma_{(0\rightarrow \ell-1)}\Y^{(\ell)} = \A^{(\ell)} (\sigma_{(0\rightarrow \ell-1)} \Y^{(\ell-1)}) $ for all $\ell = 1,\ldots,L-1$.

\begin{proof}
We proceed by induction on $\bar{\ell}$.\\
\noindent \underline{Base case}: $\bar{\ell} = L$. In this case, $\sigma_{(0\rightarrow L-1)}\mathbf{Y} = \A^{(L)} (\sigma_{(0\rightarrow L-1)}\mathbf{Y}^{(L-1)})$. Following the remark above, $\sigma_{(0\rightarrow L-1)}\mathbf{Y}^{(L-1)}$ obeys the properties of $\mathbf{X}$ from Theorem 3.1 in~\cite{altmin}. This theorem guarantees that $\hat{\A}^{(L)}$, the limit as $T \to \infty$ of $\A^{(L)}(T)$ converges to $\A^{(L)}$ with probability at least $1-2\delta_L$. Therefore, $\mathbb{P}[E_L] \geq 1-2\delta_L$, proving the base case.
\\
\\
\noindent \underline{Induction}: Suppose the Theorem is true for $\bar{\ell}$, we will show that is true for $\bar{\ell}-1$.

Conditioned on the event $\cap_{\ell=\bar{\ell}}^L E_\ell$,  $\hat{\Y}^{(\bar{\ell}-1)} = \Y^{(\bar{\ell}-1)}$ and $ \sigma_{(0 \rightarrow \bar{\ell}-2)}\Y^{(\bar{\ell}-1)} = \A^{(\bar{\ell}-1)}(\sigma_{(0 \rightarrow \bar{\ell}-2)}\Y^{(\bar{\ell}-2)})$. Therefore, applying Theorem 3.1 from~\cite{altmin}, the limit $\hat{\A}^{(\bar{\ell}-1)}$ as $T \to \infty$ of $\A^{(\bar{\ell}-1)}(T)$ converges to $\A^{(\bar{\ell}-1)}$ with probability at least $1-2\delta_{\bar{\ell}-1}$. Therefore, $\mathbb{P}[\cap_{\ell=\bar{\ell}-1}^L E_\ell] = \mathbb{P}[E_{\bar{\ell}-1} | \cap_{\ell=\bar{\ell}}^L E_\ell ] \mathbb{P}[\cap_{\ell=\bar{\ell}}^L E_\ell] = (1-2\delta_{\bar{\ell}-1}) \prod_{\ell=\bar{\ell}}^L (1-2\delta_\ell) = \prod_{\ell=\bar{\ell}-1}^L (1-2\delta_\ell)$.
\end{proof}

\noindent This completes the proof.

\subsection{Proof of Theorem~\ref{thm:aisubn}}

\begin{proof}
	We show this by bounding $\vectornorm{\sqrt{\frac{s_\A}{d}}\avec_j}_{\psi_2}$:
{\begin{small} {	
	\begin{eqnarray}
	& & \vectornorm{\sqrt{\frac{s_\A}{d}}\avec_j}_{\psi_2} = \sup_{\z \in \mathcal{S}^{d-1}} \vectornorm{\ip{\sqrt{\frac{s_\A}{d}}\avec_j}{\z}}_{\psi_2}\\
			 & = & \hspace*{-0.22in}	\sup_{\z \in \mathcal{S}^{d-1}} \sup_{p \geq 1} \frac{\mathbb{E}\left[\left|\sum_{i=1}^d z_i \U_{ij} \vmat_{ij}\right|^p\right]^{1/p}}{\sqrt{p}}\\
		& = & \hspace*{-0.22in}	\sup_{\z \in \mathcal{S}^{d-1}} \sup_{p \geq 1} \frac{\left(\mathbb{E}_{(\U_{ij})_{i=1}^d}\mathbb{E}\left[\left|\sum_{i=1}^d z_i \U_{ij} \vmat_{ij}\right|^p \Bigg|  (\U_{ij})_{i=1}^d \right]\right)^{1/p}}{\sqrt{p}}
	\end{eqnarray}
	\begin{eqnarray}
		& \leq & \hspace*{-0.22in}	\sup_{\z \in \mathcal{S}^{d-1}} \sup_{p \geq 1} \frac{\mathbb{E}_{(\U_{ij})_{i=1}^d} \mathbb{E}\left[\left|\sum_{i=1}^d z_i \U_{ij} \vmat_{ij}\right|^p \Bigg|  (\U_{ij})_{i=1}^d \right]^{1/p}}{\sqrt{p}} \\
		& \leq & \hspace*{-0.22in}	\sup_{\z \in \mathcal{S}^{d-1}} \mathbb{E}_{(\U_{ij})_{i=1}^d} \sup_{p \geq 1} \frac{\mathbb{E} \left[\left|\sum_{i=1}^d z_i \U_{ij} \vmat_{ij}\right|^p \Bigg|  (\U_{ij})_{i=1}^d \right]^{1/p}}{\sqrt{p}}.
		\label{eq:aisubn1}
	\end{eqnarray}
	}
\end{small}}

\noindent Conditioned on  $(\U_{ij})_{i=1}^d$, $\sup_{p \geq 1} \frac{\mathbb{E}\left[\left|\sum_{i=1}^d z_i \U_{ij} \vmat_{ij}\right|^p\right]^{1/p}}{\sqrt{p}}$ is the sub-Gaussian norm of the sum of $s_\A$ independent sub-Gaussian random variables. Therefore, according to Lemma 5.9 in~\cite{vershynin2010introduction},
\begin{small}{
	\begin{eqnarray}
	 & &\left(\sup_{p \geq 1} \frac{\mathbb{E}\left[\left|\sum_{i=1}^d z_i \U_{ij} \vmat_{ij}\right|^p \Bigg|  (\U_{ij})_{i=1}^d \right]^{1/p}}{\sqrt{p}}\right)^2 \\ & = &  \vectornorm{\sum_{i \in \{1,\ldots,s_A\}} z_i \vmat_{ij}}_{\psi_2}^2 \\
		& \leq & C_1 \vectornorm{V}_{\psi_2}^2 \sum_{i \in \{1,\ldots,s_\A\}} z_i^2 \\
		& \leq & C_1 \vectornorm{V}_{\psi_2}^2 \vectornorm{\z}_2^2.
		\label{eq:aisubn2}
	\end{eqnarray}
	}
	\end{small}
\noindent Putting Equation~\ref{eq:aisubn2} back into Equation~\ref{eq:aisubn1} yields
\begin{small}{
	\begin{eqnarray}
		\vectornorm{\avec_j}_{\psi_2} & \leq & \sqrt{\frac{d}{s_\A}} \sup_{\z \in \mathcal{S}^{d-1}} \mathbb{E}_{(\U_{ij})_{i=1}^d}\left[ \sqrt{C_1} \vectornorm{V}_{\psi_2}\vectornorm{\z}_2 \right] \\
		& = & \sqrt{\frac{d}{s_\A}} \sup_{\z \in \mathcal{S}^{d-1}}  \sqrt{C_1} \vectornorm{V}_{\psi_2}\vectornorm{\z}_2 \\
		& \leq & \sqrt{\frac{d}{s_\A}} \sqrt{C_1} \vectornorm{V}_{\psi_2}.
		\label{eq:aisubn3}
	\end{eqnarray}
	}
\end{small}
\end{proof}

\noindent This completes the proof.

\ifCLASSOPTIONcaptionsoff
  \newpage
\fi


\begin{thebibliography}{10}
\providecommand{\url}[1]{#1}
\csname url@samestyle\endcsname
\providecommand{\newblock}{\relax}
\providecommand{\bibinfo}[2]{#2}
\providecommand{\BIBentrySTDinterwordspacing}{\spaceskip=0pt\relax}
\providecommand{\BIBentryALTinterwordstretchfactor}{4}
\providecommand{\BIBentryALTinterwordspacing}{\spaceskip=\fontdimen2\font plus
\BIBentryALTinterwordstretchfactor\fontdimen3\font minus
  \fontdimen4\font\relax}
\providecommand{\BIBforeignlanguage}[2]{{%
\expandafter\ifx\csname l@#1\endcsname\relax
\typeout{** WARNING: IEEEtran.bst: No hyphenation pattern has been}%
\typeout{** loaded for the language `#1'. Using the pattern for}%
\typeout{** the default language instead.}%
\else
\language=\csname l@#1\endcsname
\fi
#2}}
\providecommand{\BIBdecl}{\relax}
\BIBdecl

\bibitem{patel2015probabilistic}
A.~B. Patel, T.~Nguyen, and R.~G. Baraniuk, ``A probabilistic theory of deep
  learning,'' \emph{arXiv preprint arXiv:1504.00641}, 2015.

\bibitem{papyan2017convolutional}
V.~Papyan, Y.~Romano, and M.~Elad, ``Convolutional neural networks analyzed via
  convolutional sparse coding,'' \emph{The Journal of Machine Learning
  Research}, vol.~18, no.~1, pp. 2887--2938, 2017.

\bibitem{papyan2017working}
V.~Papyan, J.~Sulam, and M.~Elad, ``Working locally thinking globally:
  Theoretical guarantees for convolutional sparse coding,'' \emph{IEEE
  Transactions on Signal Processing}, vol.~65, no.~21, pp. 5687--5701, 2017.

\bibitem{sulam2018multilayer}
J.~Sulam, V.~Papyan, Y.~Romano, and M.~Elad, ``Multilayer convolutional sparse
  modeling: Pursuit and dictionary learning,'' \emph{IEEE Transactions on
  Signal Processing}, vol.~66, no.~15, pp. 4090--4104, 2018.

\bibitem{ye2018deep}
J.~C. Ye, Y.~Han, and E.~Cha, ``Deep convolutional framelets: A general deep
  learning framework for inverse problems,'' \emph{SIAM Journal on Imaging
  Sciences}, vol.~11, no.~2, pp. 991--1048, 2018.

\bibitem{tishby2015deep}
N.~Tishby and N.~Zaslavsky, ``Deep learning and the information bottleneck
  principle,'' in \emph{Information Theory Workshop (ITW), 2015 IEEE}.\hskip
  1em plus 0.5em minus 0.4em\relax IEEE, 2015, pp. 1--5.

\bibitem{altmin}
A.~Agarwal, A.~Anandkumar, P.~Jain, and P.~Netrapalli, ``Learning sparsely used
  overcomplete dictionaries via alternating minimization,'' \emph{SIAM Journal
  on Optimization}, vol.~26, no.~4, pp. 2775--2799, 2016.

\bibitem{nguyen2019dynamics}
T.~V. Nguyen, R.~K. Wong, and C.~Hegde, ``On the dynamics of gradient descent
  for autoencoders,'' in \emph{The 22nd International Conference on Artificial
  Intelligence and Statistics}, 2019, pp. 2858--2867.

\bibitem{gregor2010learning}
K.~Gregor and Y.~LeCun, ``Learning fast approximations of sparse coding,'' in
  \emph{Proceedings of the 27th International Conference on Machine Learning
  (ICML-10)}, 2010, pp. 399--406.

\bibitem{rolfe2013discriminative}
J.~T. Rolfe and Y.~LeCun, ``Discriminative recurrent sparse auto-encoders,''
  \emph{arXiv preprint arXiv:1301.3775}, 2013.

\bibitem{tolooshams2018scalable}
B.~Tolooshams, S.~Dey, and D.~Ba, ``Scalable convolutional dictionary learning
  with constrained recurrent sparse auto-encoders,'' in \emph{2018 IEEE 28th
  International Workshop on Machine Learning for Signal Processing
  (MLSP)}.\hskip 1em plus 0.5em minus 0.4em\relax IEEE, 2018, pp. 1--6.

\bibitem{chang2019randnet}
T.~Chang, B.~Tolooshams, and D.~Ba, ``Randnet: deep learning with compressed
  measurements of images,'' in \emph{2019 IEEE 29th International Workshop on
  Machine Learning for Signal Processing (MLSP)}.\hskip 1em plus 0.5em minus
  0.4em\relax IEEE, 2019, pp. 1--6.

\bibitem{anceaume2015new}
E.~Anceaume, Y.~Busnel, and B.~Sericola, ``New results on a generalized coupon
  collector problem using markov chains,'' \emph{Journal of Applied
  Probability}, vol.~52, no.~2, pp. 405--418, 2015.

\bibitem{candes2008restricted}
E.~J. Candes \emph{et~al.}, ``The restricted isometry property and its
  implications for compressed sensing,'' \emph{Comptes rendus mathematique},
  vol. 346, no. 9-10, pp. 589--592, 2008.

\bibitem{vershynin2010introduction}
R.~Vershynin, ``Introduction to the non-asymptotic analysis of random
  matrices,'' \emph{arXiv preprint arXiv:1011.3027}, 2010.

\bibitem{sreter2017learned}
H.~Sreter and R.~Giryes, ``Learned convolutional sparse coding,'' \emph{arXiv
  preprint arXiv:1711.00328}, 2017.

\bibitem{glorot2011deep}
X.~Glorot, A.~Bordes, and Y.~Bengio, ``Deep sparse rectifier neural networks,''
  in \emph{Proceedings of the Fourteenth International Conference on Artificial
  Intelligence and Statistics}, 2011, pp. 315--323.

\bibitem{beck2009fast}
A.~Beck and M.~Teboulle, ``A fast iterative shrinkage-thresholding algorithm
  for linear inverse problems,'' \emph{SIAM journal on imaging sciences},
  vol.~2, no.~1, pp. 183--202, 2009.

\bibitem{TolooshamsBahareh2019deepresidualAE}
B.~Tolooshams, S.~Dey, and D.~Ba, ``Deep residual auto-encoders for expectation
  maximization-inspired dictionary learning,'' pp. 1--13, 2019,
  arXiv:1904.08827.

\bibitem{theodoridis2015machine}
S.~Theodoridis, \emph{Machine learning: a Bayesian and optimization
  perspective}.\hskip 1em plus 0.5em minus 0.4em\relax Academic Press, 2015.

\bibitem{aberdam2019multi}
A.~Aberdam, J.~Sulam, and M.~Elad, ``Multi-layer sparse coding: The holistic
  way,'' \emph{SIAM Journal on Mathematics of Data Science}, vol.~1, no.~1, pp.
  46--77, 2019.

\bibitem{nam2013cosparse}
S.~Nam, M.~E. Davies, M.~Elad, and R.~Gribonval, ``The cosparse analysis model
  and algorithms,'' \emph{Applied and Computational Harmonic Analysis},
  vol.~34, no.~1, pp. 30--56, 2013.

\bibitem{le2015chasing}
L.~Le~Magoarou and R.~Gribonval, ``Chasing butterflies: In search of efficient
  dictionaries,'' in \emph{2015 IEEE International Conference on Acoustics,
  Speech and Signal Processing (ICASSP)}.\hskip 1em plus 0.5em minus
  0.4em\relax IEEE, 2015, pp. 3287--3291.

\bibitem{parker2014bilinear1}
J.~T. Parker, P.~Schniter, and V.~Cevher, ``Bilinear generalized approximate
  message passing—part i: Derivation,'' \emph{IEEE Transactions on Signal
  Processing}, vol.~62, no.~22, pp. 5839--5853, 2014.

\bibitem{parker2014bilinear2}
------, ``Bilinear generalized approximate message passing—part ii:
  Applications,'' \emph{IEEE Transactions on Signal Processing}, vol.~62,
  no.~22, pp. 5854--5867, 2014.

\bibitem{spielman2012exact}
D.~A. Spielman, H.~Wang, and J.~Wright, ``Exact recovery of sparsely-used
  dictionaries,'' in \emph{Conference on Learning Theory}, 2012, pp. 37--1.

\bibitem{jung2016minimax}
A.~Jung, Y.~C. Eldar, and N.~G{\"o}rtz, ``On the minimax risk of dictionary
  learning,'' \emph{IEEE Transactions on Information Theory}, vol.~62, no.~3,
  pp. 1501--1515, 2016.

\bibitem{cvxpy}
S.~Diamond and S.~Boyd, ``{CVXPY}: A {P}ython-embedded modeling language for
  convex optimization,'' \emph{Journal of Machine Learning Research}, vol.~17,
  no.~83, pp. 1--5, 2016.

\bibitem{mosek}
\BIBentryALTinterwordspacing
``{The MOSEK optimization software}.'' [Online]. Available:
  \url{http://www.mosek.com/}
\BIBentrySTDinterwordspacing

\bibitem{garg2009gradient}
R.~Garg and R.~Khandekar, ``Gradient descent with sparsification: an iterative
  algorithm for sparse recovery with restricted isometry property,'' in
  \emph{Proceedings of the 26th Annual International Conference on Machine
  Learning}.\hskip 1em plus 0.5em minus 0.4em\relax ACM, 2009, pp. 337--344.

\bibitem{dask}
\BIBentryALTinterwordspacing
{Dask Development Team}, \emph{Dask: Library for dynamic task scheduling},
  2016. [Online]. Available: \url{http://dask.pydata.org}
\BIBentrySTDinterwordspacing

\bibitem{sulam2019multi}
J.~Sulam, A.~Aberdam, A.~Beck, and M.~Elad, ``On multi-layer basis pursuit,
  efficient algorithms and convolutional neural networks,'' \emph{IEEE
  transactions on pattern analysis and machine intelligence}, 2019.

\bibitem{Kingma2014AdamAM}
D.~P. Kingma and J.~Ba, ``Adam: A method for stochastic optimization,'' in
  \emph{Proc. the 3rd International Conference on Learning Representations
  (ICLR)}, 2014, pp. 1--15.

\bibitem{chen2018theoretical}
X.~Chen, J.~Liu, Z.~Wang, and W.~Yin, ``Theoretical linear convergence of
  unfolded ista and its practical weights and thresholds,'' in \emph{Advances
  in Neural Information Processing Systems}, 2018, pp. 9061--9071.

\end{thebibliography}
\end{document}